\definecolor{purple}{rgb}{0.5, 0.0, 0.5}
\definecolor{dark_green}{rgb}{0.0, 0.5, 0.0}
\definecolor{dyellow}{rgb}{0.99, 0.93, 0.0}
\definecolor{amaranth}{rgb}{0.9, 0.17, 0.31}
\newcommand{\dgreen}{\color{dark_green}}
\newcommand{\blue}{\color{blue}}
\newcommand{\cyan}{\color{cyan}}
\newcommand{\purple}{\color{purple}}
\newcommand{\white}{\color{white}}
\newcommand{\diffred}{\color{amaranth}}  
\newcommand{\Cc}{\mathcal{C}}
\newcommand{\I}{\mathcal{I}}
\newcommand{\J}{\mathcal{J}}
\newcommand{\K}{\mathcal{K}}
\newcommand{\vb}{\bold{v}}
\newcommand{\Vb}{\bold{V}}
\newcommand{\wb}{\bold{w}}
\newcommand{\xb}{\bold{x}}
\newcommand{\Xb}{\bold{X}}
\newcommand{\Xbh}{\hat{\bold{X}}}
\newcommand{\Yb}{\bold{Y}}
\newcommand{\Gb}{\bold{G}}
\newcommand{\Tb}{\bold{T}}
\newcommand{\texS}{\mathrm{S}}
\newcommand{\texT}{\mathrm{T}}
\newcommand{\D}{\mathcal{D}}
\newcommand{\T}{\mathcal{T}}
\newcommand{\At}{\tilde{A}}
\newcommand{\Bt}{\tilde{B}}
\newcommand{\Bbt}{\tilde{\bold{B}}}
\newcommand{\Ct}{\tilde{C}}
\newcommand{\Abar}{\bar{A}}
\newcommand{\Bbar}{\bar{B}}
\newcommand{\Cbar}{\bar{C}}
\newcommand{\Z}{\mathbb{Z}}
\newcommand{\R}{\mathbb{R}}
\newcommand{\C}{\mathbb{C}}
\newcommand{\F}{\mathbb{F}}
\newcommand{\E}{\mathbb{E}}
\newcommand{\N}{\mathbb{N}}
\newcommand{\Q}{\mathbb{Q}}
\newcommand{\gb}{\bold{g}}
\newcommand{\gt}{\tilde{g}}
\newcommand{\Ab}{\bold{A}}
\newcommand{\Bb}{\bold{B}}
\newcommand{\Cb}{\bold{C}}
\newcommand{\Cbbar}{\bar{\Cb}}
\newcommand{\Ib}{\bold{I}}
\newcommand{\ab}{\bold{a}}
\newcommand{\abt}{\tilde{\bold{a}}}
\newcommand{\bb}{\bold{b}}
\newcommand{\eb}{\bold{e}}
\newcommand{\rt}{\tilde{r}}
\newcommand{\Cfr}{\mathfrak{C}}
\newcommand{\taut}{\tilde{\tau}}
\DeclareMathOperator{\spn}{span}
\DeclareMathOperator{\supp}{supp}
\DeclareMathOperator{\nnz}{nnzr}
\DeclareMathOperator{\rank}{rank}
\DeclareMathOperator{\rem}{rem}
\DeclareMathOperator{\tr}{tr}
\DeclareMathOperator{\cmmO}{\mathrm{CMM-1}}
\DeclareMathOperator{\cmmT}{\mathrm{CMM-2}}
\DeclareMathOperator{\OPR}{\mathrm{(OP-R)}}
\DeclareMathOperator{\IPB}{\mathrm{(IP-B)}}
\newcommand{\ldpc}{\sfsty{LDPC}}
\newcommand{\SVD}{\mathrm{SVD}}
\DeclarePairedDelimiter\ceil{\lceil}{\rceil}
\DeclarePairedDelimiter\floor{\lfloor}{\rfloor}
\DeclarePairedDelimiter\bceil{\Big\lceil}{\Big\rceil}
\DeclarePairedDelimiter\bfloor{\Big\lfloor}{\Big\rfloor}
\newtheorem{Thm}{Theorem}
\newtheorem{Cor}[Thm]{Corollary}
\newtheorem{Prop}[Thm]{Proposition}
\newtheorem{Lemma}[Thm]{Lemma}
\newtheorem{Def}[Thm]{Definition}
\newtheorem{Rmk}[Thm]{Remark}
\newcommand{\ind}{\text{\color{white}.$\quad$}}
\DeclareMathOperator*{\argmin}{arg\,min}
\newcommand{\sfsty}[1]{\ensuremath{\mathsf{#1}}}  
\begin{document}
\title{Generalized Fractional Repetition Codes for Binary Coded Computations}

\author{\large%
Neophytos Charalambides, Hessam Mahdavifar, and Alfred O. Hero III
  \vspace{-.25in}
  \thanks{Part of the material in this paper was presented at the 2020 IEEE International Symposium on Information Theory (ISIT), Los Angeles, USA, June 2020. \newline
  ${\white..}$ Neophytos Charalambides is with the Department of Computer Science and Engineering, University of California at San Diego, La Jolla, CA 92093 USA.\newline
  ${\white..}$ Hessam Mahdavifar is with the Department of Electrical and Computer Engineering, Northeastern University, Boston, MA 02115 USA.\newline
  ${\white..}$ Alfred O. Hero, III is with the Department of Electrical Engineering and Computer Science, University of Michigan at Ann Arbor, Ann Arbor, MI 48104 USA.\newline
  ${\white..}$ e-mails: \texttt{neophytoschara@ucsd.edu}, \texttt{h.mahdavifar@northeastern.edu}, \texttt{hero@umich.edu}
  }
}

\maketitle

\begin{abstract}
This paper addresses the gradient coding and coded matrix multiplication problems in distributed optimization and coded computing. We present a computationally efficient coding method which overcomes the drawbacks of the \textit{Fractional Repetition Coding} gradient coding method proposed by Tandon et al., and can also be leveraged by coded computing networks whose servers are of heterogeneous nature. Specifically, we propose a construction for fractional repetition gradient coding; while ensuring that the generator matrix remains close to perfectly balanced for any set of coding parameters, as well as a low complexity decoding step. The proposed binary encoding avoids operations over the real and complex numbers which inherently introduce numerical and rounding errors, thereby enabling accurate distributed encodings of the partial gradients. We then make connections between gradient coding and coded matrix multiplication. Specifically, we show that any gradient coding scheme can be extended to coded matrix multiplication. Furthermore, we show how the proposed binary gradient coding scheme can be used to construct two different coded matrix multiplication schemes, each achieving different trade-offs.
\end{abstract}
\vspace{3mm}

\begin{IEEEkeywords}
Distributed gradient descent, distributed matrix multiplication, binary erasure codes, straggler mitigation, numerical accuracy, fractional repetition codes.
\end{IEEEkeywords}

\section{Introduction}
\label{intro}

The \textit{curse of dimensionality} has been a major impediment to solving large scale problems, which often require heavy computations. Recently, coding-theoretic ideas have been adopted in order to accommodate such computational tasks in a distributed fashion, under the assumption that \textit{straggler} workers are present \cite{LLPPR18,li2016unified,reisizadeh2017coded,li2016coded,li2017coding,yang2017computing,vulimiri2013low,wang2018coded,ramamoorthy2019universally,YSRKSA18,RRG20,SMA21,MCSPJ22}. Stragglers are workers whose assigned tasks may never be completed, due to delay or outage, and can significantly increase the overall computation time. When the computational tasks are encoded, e.g., by using linear codes, the distributed computations can be protected against erasures.

In this paper, we first focus on the problem of exact recovery of the gradient in a distributed computing setting. We adopt the framework of \cite{TLDK17} where \textit{Gradient Coding} (GC)\footnote{For brevity, we refer to a \textit{gradient coding scheme} as GCS, a \textit{coded matrix multiplication (scheme)} as CMM/CMMS; and CMMSs for plural.} was proposed for recovery of the gradient when the objective loss function is differentiable and additively separable. Gradient recovery is accomplished by replicating the tasks in a certain way, so as to introduce the redundancy needed to combat the effect of stragglers. The problem of exact recovery of the gradient was studied in several prior works, e.g., \cite{TLDK17,HASH17,RTTD17,OGU19,CMH20,LMSAS18}, while the numerical stability issue was studied in \cite{YA18}. There are also several works involving GC for approximate recovery of the gradient \cite{CP18,RTTD17,CPE17,WCP19,BWE19,WLS19,KKR19,HYKM19,CHZP18,CPH20a,CMPH22}. Also, an idea similar to GC had appeared in \cite{ZKJKS08}, though not within a coding-theoretic setting. 

We propose a scheme for GC that is numerically accurate and efficient, and works in fixed point precision. The proposed scheme avoids floating-point representations and operations, e.g., division or multiplication of real or complex numbers. Furthermore, the encoding matrix is binary, simplifying the encoding process. This scheme is also deterministic and does not require generating random numbers. The method is similar in spirit to the \textit{fractional repetition scheme} introduced in \cite{ERK10,TLDK17}, where we also drop the strict assumption that $s+1$ divides $n$, where $n$ is the number of workers and $s$ is the number of stragglers that the scheme tolerates. The main advantage of encoding and decoding real-valued data using binary matrices is that it does not introduce further numerical errors, possibly adding to the rounding errors of the associated computation tasks. Such a binary approach was considered in \cite{JSM19} for matrix-vector multiplication. The fact that the encoding matrix is defined over $\{0,1\}$ allows us to view the encoding as task assignments. This also leads to a more efficient online decoding, which avoids searching through a polynomially large table in the number of workers $n$, as in the original GCS proposed in \cite{TLDK17}. Moreover, our encoding matrix can be understood as a generalization of the \textit{Fractional Repetition Coding} (FRC) scheme from \cite{TLDK17} for the case when $(s+1)\nmid n$, and our decoding algorithm can be used in conjunction with the corresponding GCS, for a low complexity online decoding without constructing a large decoding matrix a priori.

Dropping the aforementioned assumption results in an unbalanced load assignment among the workers. Under the assumption that the workers are \textit{homogeneous}, we allocate the partitions as uniform as possible. To avoid bias when considering workers of \textit{heterogeneous} nature, i.e., of different computational power, the allocation of the computational tasks should be done in such a way that all workers have the same expected completion time; as the stragglers are assumed to be uniformly random. We provide an analysis which determines how to appropriately allocate the assignments, so that this objective is met. We note that similar ideas appear in \cite{TLDK17, OGU19}, in the context of \textit{partial} and \textit{non-persistent} stragglers; respectively.

The majority of coded computing has focused on fundamental algebraic operations, such as matrix multiplication and polynomial evaluation. Directly adopting these schemes in general optimization problems is often not possible, since the gradient computation may not have any algebraic structure, or can only be evaluated numerically \cite{LA20}. In this work we study the other direction, i.e., how to devise \textit{Coded Matrix Multiplication} (CMM) schemes from GC. The key idea is to leverage the additive structure underlying both problems. By a simple modification to the encoding and decoding steps of any exact GCS, we show that the GCS can be utilized for matrix multiplication. In a similar fashion, we can transform any GCS into a distributive straggler robust addition scheme.

The proposed GC method can be adapted to compute matrix-matrix multiplication in the presence of stragglers; which has gained a lot of attention recently, as well as matrix inverse approximations \cite{CPH20b,CPH22a}. The first CMMS was proposed in \cite{LLPPR18}. Since then, a multitude of CMMSs have been proposed \cite{YMAA17,LSR17,YMAA20,YA20,FJHDCG17,DFHJCG19,FC19,SHN19,DRV21,RT21,DR21}, with each of them being advantageous to others in certain aspects. There is also a considerable amount of work on coded matrix-vector multiplication in distributed computing \cite{DCG16,FC18,JSM19,BP23}. Furthermore, numerical stability for coded matrix multiplication has been studied in \cite{FC19,SHN19,DRV21,RT21,DR21,JDCC21}. Approximate coded matrix-matrix and matrix-vector schemes have also been devised \cite{FD16,JM19,CPH20c,THRD21,KD22,JDCC21,RCHV23}. It is also worth mentioning tangentially related works to coded computing \cite{KDKE23,BWCE23,MDSE24,KE24}, which do not consider the presence of stragglers.

We show that any GCS can be extended to a CMMS. The main idea is that the product of two matrices is equal to the sum of the outer-products of their columns and rows, respectively. This property has been utilized in the context of CMM \cite{FJHDCG17,DFHJCG19,CPH20c} and, to our knowledge, we are the first to connect this to GC. We present two new CMMSs based on the proposed binary GC, each achieving different trade-offs. Since the proposed CMMSs are derived from a GCS, they have properties that differ from other CMM approaches, and do not satisfy the same bounds and thresholds. However, our proposed schemes achieve the optimal trade-off between task allocations and the number of stragglers of GC. They also preserve the desired properties possessed by binary GC methods. For example, there is no need for complex encoding and decoding procedures; and the proposed methods are numerically accurate and computationally efficient.

Our main contributions are the following:
\begin{itemize}[noitemsep,nolistsep]
  \item Introduction of a \textit{binary} GCS --- both in the encoding and decoding, that is resilient to stragglers;
  \item Elimination of the restrictive assumption $(s+1)\mid n$ in \cite{TLDK17};
  \item We show that the proposed GCS achieves perfect gradient recovery;
  \item We derive the minimum maximum load over all workers of any binary fractional repetition GCS, for any pair of parameters $(s,n)$;
  \item We show how the unbalanced assignment, which arises when $(s+1)\nmid n$, can be made optimal when the workers are homogeneous;
  \item As compared to the original binary scheme \cite{TLDK17}, we give a more efficient online decoding step;
  \item We determine the optimal task assignment for heterogeneous workers;
  \item We show how any GCS can be extended to a CMMS;
  \item We use our binary GCS to devise two binary CMMSs.
\end{itemize}

The rest of this paper is organized as follows. In Section \ref{str_problem_GC} we provide a review of the straggler problem in GC \cite{TLDK17}. In Section \ref{BGC_sec} binary GC is introduced, and we describe the conditions for which a close to balanced task allocation needs to meet; when $(s+1)\nmid n$. Our binary encoding and decoding procedures are discussed in Section \ref{proposed_scheme_section}. In Subsection \ref{val_opt_subsec} we establish the optimality of our GCS, and in Subsection \ref{heter_case_sec} we consider scenarios with heterogeneous workers. The focus is then shifted towards CMM, and in Section \ref{matr_mult_sec} we show how any GCS can be utilized to devise a CMMS. Then, another CMMS derived from our GCS is presented in Subsection \ref{2nd_sch_sec}. In Subsection \ref{comp_section} we compare and contrast our two CMMSs, and discuss where they have been utilized in other coded computing applications. In Section \ref{comparison_sec} we compare our schemes to prior methods, and draw connections with other areas in information theory. Section \ref{sec:conc} concludes the paper.

We also provide appendices with further details on our algorithms, numerical examples, and experimental justification. In Appendix \ref{appl_Fr_norm} we present various applications in which CMM can be utilized in gradient descent iterative algorithms; for Frobenius-norm minimization problems. These demonstrate further connections between the CMM and GC problems.

\section{Preliminaries}
\label{str_problem_GC}

\subsection{Straggler Problem}
\label{str_problem}

Consider a single central server that has at its disposal a dataset $\D=\left\{(\xb_i,y_i)\right\}_{i=1}^N\subsetneq \R^p\times\R$ of $N$ samples, where $\xb_i$ represents the features and $y_i$ denotes the label of the $i^{th}$ sample. The central server distributes the dataset $\D$ among $n$ workers to facilitate computing the solution of the problem
\begin{equation}
\label{th_star_pr}
  \theta^{\star} = \arg\min_{\theta\in\R^p}\left\{ \sum_{i=1}^N \ell(\xb_i,y_i;\theta) + \mu R(\theta) \right\}
\end{equation}
where $L(\D;\theta)=\sum_{i=1}^N \ell(\xb_i,y_i;\theta)$ is the empirical loss; for $\ell(\xb_i,y_i;\theta)$ a predetermined differentiable loss-function, and $\mu R(\theta)$ is a regularizer. A common approach to solving (\ref{th_star_pr}) is to employ gradient descent. Even if closed-form solutions exist for (\ref{th_star_pr}), gradient descent is advantageous for large $N$.

The central server is assumed to be capable of distributing the dataset appropriately, with a certain level of redundancy, in order to recover the gradient based on the full dataset $\D$. For clarity, we will assume that $k\mid N$. As a first step we partition $\D$ into $k$ disjoint parts $\{\D_j\}_{j=1}^k$ each of size $N/k$. If $k\nmid N$ such that $N=\alpha k+\beta$ with $\beta\equiv N\bmod k \not\equiv 0$, we can partition $\D$ into $\beta$ parts of size $\ceil{N/k}=\alpha+1$ and $k-\beta$ parts of size $\floor{N/k}=\alpha$. The gradient is the quantity
\begin{equation}
\label{grad_equation}
  g=\nabla_{\theta}L(\D;\theta)=\sum_{j=1}^k\nabla_{\theta}\ell(\D_j;\theta)=\sum_{j=1}^k g_j\ .
\end{equation}
We refer to the terms $g_j\coloneqq\nabla_{\theta}\ell(\D_j;\theta)$ as \textit{partial gradients}. 

In a distributed computing setting each worker node completes its task by returning a certain encoding of its assigned partial gradients. There can be different types of failures that may occur during the computation or the communication process. The worker nodes that fail to complete their tasks and return the outcome to the central server are called \textit{stragglers}. It is assumed that there are $s$ stragglers, thus, the central server only receives $f=n-s$ completed tasks. Let $\I\subsetneq\N_n\coloneqq\{1,\cdots,n\}$ denote the set of indices of the $f$ workers who complete and return their tasks. In practice, the completed tasks may be received at different times. Once \textit{any} set of $f$ tasks is received, the central server should be able to decode the received encoded partial gradients and recover the full gradient $g$.

\subsection{Gradient Coding}
\label{GC}

Gradient coding, proposed in \cite{TLDK17}, is a procedure comprised of an encoding matrix $\Bb\in\Sigma^{n\times k}$, and a decoding vector $\ab_{\I}\in\Sigma^n$; determined by $\I$, for $\Sigma$ the field over which the encoding-decoding takes place. It is commonly assumed that the workers have the same computational power, in which case the same number of tasks is assigned to each of them. We relax this restriction in this paper, and thus do not need to impose the assumption $(s+1)\mid n$ from \cite{TLDK17}. Each row of $\Bb$ corresponds to an encoding vector, also regarded as a task allocation, and each column corresponds to a data partition $\D_j$.

Each worker node is assigned a number of partial gradients from the partition $\{\D_j\}_{j=1}^k$, indexed by $\J_i\subsetneq\N_k$. The workers are tasked to compute an encoded version of the partial gradients $g_j\in\R^p$ corresponding to their assignments. Let
\begin{equation}
\label{g_matrix}
  \gb \coloneqq {\begin{pmatrix} | & | & & | \\ g_1 & g_2 & \hdots & g_k \\ | & | & & | \end{pmatrix}}^T \in \R^{k\times p}
\end{equation}
denote the matrix whose rows constitute the transposes of the partial gradients. The received encoded partial gradients will be the rows of $\Bb\gb\in\R^{n\times p}$ indexed by $\I$.

The full gradient of the objective (\ref{th_star_pr}) on $\D$ can be recovered by applying $\ab_{\I}$ which is designed to have a support that is a subset of $\I$
\begin{equation}
\label{GC_identity}
  g^T=\ab_{\I}^T(\Bb\gb)=\bold{1}_{1\times k}\gb= \sum_{j=1}^kg_j^T,
\end{equation}
provided that the encoding matrix $\Bb$ satisfies
\begin{equation}
\label{GC_condition}
  \ab_{\I}^T\Bb=\bold{1}_{1\times k}
\end{equation}
for all ${{n}\choose{s}}$ possible index sets $\I$. Note that in perfectly balanced schemes, every partition is sent to $s+1$ servers, and each server will receive at least $\frac{k}{n}(s+1)$ distinct partitions. In Section \ref{BGC_sec}, we propose a binary design of the encoding matrix $\Bb$ and decoding vector $\ab_{\I}$. These may then be used for recovering the gradient $g$ at each iteration by the central server.

In \cite{TLDK17}, a \textit{balanced} assignment is considered, which is the case where all the workers are assigned the same number of tasks. This number is lower bounded by $\frac{k}{n}(s+1)$, i.e.,
\begin{equation}
\label{bound_TLDK}
  \|\Bb_{(i)}\|_0 \geq \frac{k}{n}(s+1) \quad \text{ for all } i\in\N_n
\end{equation}
for $\Bb_{(i)}$ the $i^{th}$ row of $\Bb$. When this lower bound is met with equality, the scheme is\textit{ maximum distance separable} (MDS). The restriction $(s+1)\mid n$ allows the GC to satisfy this bound, as $\frac{n}{s+1}$ needs to be an integer. Since the bound of \eqref{bound_TLDK} implies that any balanced scheme with encoding-decoding pair $(\Bb,\ab_\I)$ that assigns the same number of data samples to all the workers must assign at least $\frac{s+1}{n}$ fraction of the data to each worker, which is independent of $k$, the analysis throughout \cite{TLDK17} assumes that $n=k$. We will make the same assumption for our main analysis, in order to also give explicit bounds which do not involve the floor or ceiling functions. However, in Subsection \ref{close_bal_des_subsec}, we give a description of how one handles the general cases where $n\neq k$, and how we can determine an appropriate partitioning of $\D$ into $n$ parts when considering homogeneous workers. In Theorem \ref{thm_mim_max_load}, we give an analogous lower bound to \eqref{bound_TLDK} for when $(s+1)\nmid n$.

In general, GC schemes require processing over $s+1$ partitions from each worker in order to tolerate up to $s$ stragglers. Since these schemes encode over partial gradients computed from unprocessed data, they are applicable to a large class of loss functions, as well as loss functions whose gradient can only be computed numerically, e.g., in deep neural networks \cite{LA20}.

The binary GCS proposed in \cite{TLDK17}, replicates the task done by a subset of the workers, following the steps below:
\begin{enumerate}
  \item divide the $n$ workers into $(s+1)$ groups of size $\frac{n}{s+1}$;
  \item  in each group, all the data is divided equally and disjointly, assigning $(s+1)$ partitions to each worker;
  \item all the groups are replicas of each other;
  \item when finished computing, every worker transmits the sum of its partial gradients.
\end{enumerate}
We note that 1) and 2) correspond to what is referred to as ``fractional'', and 3) to ``repetition'', in the choice of the GCS's name. Point 4), corresponds to the local encoding performed by the workers. Such coding schemes were first considered in \cite{ERK10} to achieve storage capacity for random access repair, in distributed storage systems. The encoding matrix in \cite{TLDK17} is constructed by first defining a binary encoding matrix for each of the groups, i.e., point 2):
\begin{equation*}
\label{B_block}
  \Bb_{\text{block}} = \Ib_{\frac{n}{s+1}}\otimes\bold{1}_{1\times(s+1)}\in\{0,1\}^{\frac{n}{s+1}\times n}
\end{equation*}
and then augmenting the binary encoding matrices $\{\Bb_{\text{block}}^{(i)}\}_{i=1}^{s+1}$ for each worker $i$; all of which are equal to $\Bb_{\text{block}}$, which corresponds to point 3):
\begin{equation}
\label{B_FRC}
  \Bb_{\text{FRC}} = \bold{1}_{(s+1)\times1}\otimes\Bb_{\text{block}} = \begin{bmatrix} \Bb_{\text{block}}^{(1)} \\ \vdots \\ \Bb_{\text{block}}^{(s+1)} \end{bmatrix} \in \{0,1\}^{n\times n}\ .
\end{equation}

In what we present, we consider an equivalent encoding\footnote{By \textit{equivalent encoding}, we refer to two encodings in which one can be obtained from the other via a permutation of the generator's rows.} for FRC to simplify our analysis and decoding, which is done in terms of congruence classes. This encoding is a permutation on the rows of $\Bb_{\text{FRC}}$, and is defined as follows:
\begin{equation*}
\label{B_FRC_permuted}
  \Bb_{\text{FRC}}' = \Ib_{\frac{n}{s+1}} \otimes \bold{1}_{(s+1)\times(s+1)}\ .
\end{equation*}
Simply stated, $\Bb_{\text{FRC}}'$ is a block diagonal matrix with blocks of size $(s+1)\times(s+1)$, comprised of all ones. This further elaborates as to why the proposed coding technique is called `Fractional Repetition Coding'. Specifically, a fraction of the data (support across the rows of $\Bb_{\text{FRC}}'$) are sent to each of the workers, and the same repeated encoding of partial gradients is requested from a subset of the servers.

We can now formally define what we mean by fractional repetition coding; and generalized fractional repetition coding, in coded computing.

\begin{Def}
\label{FRC_def}
In the context of coded computing; \textbf{fractional repetition coding} is any coded computing scheme which admits an encoding matrix of the same or an equivalent structure as \eqref{B_FRC}.
\end{Def}

As we have already seen, for an encoding scheme to be able to have the structure of \eqref{B_FRC}; it is necessary that $(s+1)\mid n$. In the next definition, we are able to relax the definition of fractional repetition coding by removing this stringent assumption, and hence generalize Definition \ref{FRC_def}. We clarify that generalized FRC can be defined for any pair of integers $(s,n)$ where $0\leq s<n$, and in the case where $(s+1)\mid n$, we get a coding scheme according to Definition \ref{FRC_def}. An explicit example where $(s+1)\nmid n$ is provided in Appendix \ref{example_app}.

\begin{Def}
\label{gen_FRC_def}
In the context of coded computing, \textbf{generalized fractional repetition coding} is any coded computing scheme which has an encoding matrix that is the augmentation of two FRC encoding matrices $\Bb_1$ and $\Bb_2$ with a structure similar to that of \eqref{B_FRC}; where the only difference is that now no two rows differ in cardinality of their support by more than one within each $\Bb_1$ and $\Bb_2$, and no two rows across both $\Bb_1$ and $\Bb_2$ in cardinality of their support by more than $t+2$; for $t=\Big\lfloor n\bmod(s+1)\big/\big\lfloor\frac{n}{s+1}\big\rfloor \Big\rfloor$.
\end{Def}

By a fractional repetition GCS, we mean a GCS with a FRC encoding matrix. To avoid repetitiveness, henceforth by fractional repetition GCS we mean an encoded GC method that uses a \textit{generalized} FRC encoding matrix. Analogously, we define a fractional repetition CMMS. We refer to any coded computing method with an encoding matrix equivalent to the ones defined in Definition \ref{gen_FRC_def}, as a fractional repetition method.

At this point, we also note that in related work \cite{LKYSA18} the partitions sent to each worker are pre-processed, such that the computations at the workers are viewed as evaluating a polynomial at distinct points. This approach is referred to as \textit{Polynomially coded regression}, and only applies to the least squares objective function. The central server computes the gradient by interpolating this polynomial. By working on the encoded data instead, the authors of \cite{LKYSA18} reduce the threshold on the number of workers that need to respond.

\subsection{Notational Conventions}
\label{notation_sec}

Let $\vb\in\R^p$ and $\Vb\in\R^{p\times q}$ respectively denote an arbitrary vector and matrix of the specified dimensions. The support of $\vb$; i.e., the index set of elements which are nonzero, is denoted by $\supp(\vb)$. The number of nonzero elements in $\vb$ is denoted by $\nnz(\vb)$; i.e., $\nnz(\vb)=|\supp(\vb)|$. We define $\nnz(\Vb)$ analogously. The row-span of a $\Vb$ is denoted by $\spn(\Vb)$.

The vector Euclidean norm of $\vb$ is defined as $\|\vb\|_2 = \sqrt{\vb^T\vb} = 
\left(\sum_i\vb_i^2\right)^{1/2}$, the $L_0$ norm of $\vb$ as $\|\vb\|_0=\nnz(\vb)$, and the matrix Frobenius norm of $\Vb$ as $\|\Vb\|_F = \sqrt{\tr(\Vb^T\Vb)} = \left(\sum_i\sum_j\Vb_{ij}^2\right)^{1/2}$. Also, $\Vb_{(i)}$ denotes the $i^{th}$ row of $\Vb$, $\Vb^{(j)}$ denotes the $j^{th}$ column of $\Vb$, and the $p \times p$ identity matrix is denoted by $\Ib_p$. For a row index set $I\subseteq\{1,2,\ldots,p\}$ of $\Vb$, the submatrix comprised of the rows indexed by $I$ is $\Vb_{I}\in\R^{|I|\times q}$. We denote the set of nonnegative integers by $\N_0\coloneqq\{0,1,2,\ldots\}$, the positive integers up to $n$ by $\N_n\coloneqq\{1,\ldots,n\}$; the positive integers up to $n$ and including $0$ by $\N_{0,n}\coloneqq\{0,1,\ldots,n\}$, and the collection of size $q$ subsets of $\N_n$ by $\I_q^n$; i.e., $|\I_q^n|={{n}\choose{q}}$. Disjoint unions are represented by $\bigsqcup$; e.g., $\Z=\{j:j\text{ is odd}\}\bigsqcup\{j:j\text{ is even}\}$. By $\eb_j$ we denote the $j^{th}$ standard basis vector of $\R^n$. The remainder function is denoted by $\rem(\cdot,\cdot)$, i.e., for positive integers $a$ and $b$; $\rem(b,a)=b-a\cdot\floor{\frac{b}{a}}$.

In the context of GC, the parameter $N$ is reserved for the number of data samples, each consisting of $p$ features and one label. In the context of CMM, the parameter $N$ denotes the common dimension of the two matrices being multiplied, i.e., the number of columns of the first matrix and the number of rows of the second matrix. The integer $k$ denotes the number of partitions of the dataset in GC, and of the matrix or matrices in CMM. With $n$ the number of workers and $s$ the number of stragglers, the number of ``blocks'' is determined by $\ell=\floor{\frac{n}{s+1}}$. The set of indices of the $f=n-s$ non-straggling workers is denoted by $\I$, and is an element of $\I_f^n$. Our encoding matrix is denoted by $\Bb$, and our decoding vector for the case where a certain $\I$ occurs is denoted by $\ab_\I$. In the CMM setting, these are denoted by $\Bbt$ and $\abt_\I$, respectively.

\section{Binary Gradient Coding}
\label{BGC_sec}
\setcounter{MaxMatrixCols}{20}

In this section we motivate our approach to binary GC. The main idea behind binary schemes is to ensure that the superposition of the corresponding encoding vectors of a certain subset of the workers with non-overlapping assigned tasks, results in the all ones vector. This gives us a simple condition for binary GC, which leads to a special case of condition \eqref{GC_condition}. Additionally, we derive a strict lower bound for the total computational load of any GCS, and formalize what we mean by ``close to uniform/balanced'' assignments. It is worth noting that this problem has also been extensively studied, e.g., \cite{ABKU99}. We incorporate these into an optimization problem with rational constraints, which we constructively solve through our encoding in Section \ref{proposed_scheme_section}; when requiring the additional constraint that the encoding-decoding pair is over $\{0,1\}$. We also derive the minimum maximum load over all workers of any binary fractional repetition GCS, for any pair of parameters $(s,n)$.

\subsection{Binary GC Condition}

For our encoding, we have the following simple strategy in order to meet condition \eqref{GC_condition}, for any $\I\in\I_f^n$. We divide the workers into $s+1$ subsets, and arrange the data partitions among the workers of each subset in such a way, so that their allocated task of computing and encoding certain partial gradients (corresponding to the arrangement); in each subset partition the entire gradient without any overlaps within the partition of the workers. If there were any overlaps, the corresponding partial gradients would be accounted for more than once, as the decoding is binary. The partitions of the worker subsets are indexed by $s+1$ disjoint sets $\{\K_i\}_{i=0}^{s}$; i.e., $\bigsqcup_{i=0}^{s}\K_i=\N_n$. This is a structural assumption on the encoding matrix of FRC schemes, where each $\K_i$ corresponds to $\Bb_{\text{block}}^{(i+1)}$ in \eqref{B_FRC}. In our approach, $\{\K_i\}_{i=0}^{s}$ correspond to distinct congruence classes, which we describe in Section \ref{proposed_scheme_section}.

As soon as all workers from one of the $s+1$ worker subsets have responded, the gradient $g$ is recoverable. In the worst case, we will have $n-s$ responses, as by the pigeonhole principle; at most $s$ subsets will have exactly one straggler, and the remaining subset will have none. From this, in the case where the workers are partitioned into more than $s+1$ subsets, the scheme could tolerate more stragglers. By this, since we are considering a fixed $s$, we divide the workers into exactly $s+1$ subsets. Furthermore, the arrangement of data partitions which takes place within each subset of workers, does not matter, as long as every data block/partial gradient is assigned to exactly one worker, i.e., there is no overlap. In what follows, we do not consider the degenerate case where a worker is not assigned any partition, i.e., $|\supp(\Bb_{(i)})|\geq1$ for all $i\in\N_n$. This idea is summarized in Proposition \ref{general_cond_prop} and Corollary \ref{eq_formulation_cor}.

\begin{Prop}
\label{general_cond_prop}
Let $\Bb\in\{0,1\}^{n\times k}$, and partition its rows into $s+1$ nonempty subsets with index sets $\{\K_i\}_{i=0}^{s}$; i.e., $\bigsqcup_{i=0}^{s}\K_i=\N_n$. If for all $i\in\N_{0,s}$:
\begin{equation}
\label{cond_gen_GC_B}
  \sum_{j\in\K_i}\Bb_{(j)}=\bold{1}_{1\times k},
\end{equation}
then, for any $\I\in\I_f^n$, it follows that $\bold{1}_{1\times k}\in\spn(\Bb_\I)$. This is a sufficient condition for binary encoding-decoding pairs $(\Bb,\ab_\I)\subsetneq\{0,1\}^{n\times k}\times\{0,1\}^n$, to satisfy \eqref{GC_identity}.
\end{Prop}

\begin{proof}
Fix an arbitrary $\I\subsetneq\N_n$ of size $f=n-s$. By the pigeonhole principle, since only $s$ indices from $\N_n$ are not included in $\I$, we know that at least one of $\{\K_i\}_{i=0}^{s}$ is contained in $\I$; say $\K_l$. This implies that $\Bb_{\K_l}\in\{0,1\}^{|\K_l|\times k}$ is a submatrix of $\Bb_\I\in\{0,1\}^{f\times k}$. By condition \eqref{cond_gen_GC_B}, it follows that $\bold{1}_{1\times k}\in\spn(\Bb_{\K_l})\subsetneq\spn(\Bb_\I)$. Hence, this is a sufficient condition for satisfying \eqref{GC_identity} with a binary encoding-decoding pair, which completes the proof.
\end{proof}

\begin{Cor}
\label{eq_formulation_cor}
An equivalent formulation of \eqref{cond_gen_GC_B}, is to simultaneously satisfy: $\supp(\Bb_{(j)})\bigcap\supp(\Bb_{(l)})=\emptyset$ for all $j,l\in\K_i$ where $j\neq l$, and $\bigcup_{\iota\in\K_i}\supp(\Bb_{(\iota)})=\N_k$. Moreover, the corresponding decoding vector used to meet \eqref{GC_condition} when $\K_i\subsetneq \I$, is $\ab_{\I}=\sum_{j\in\K_i}\eb_{j}\in\{0,1\}^n$.
\end{Cor}

\begin{proof}
Assume we have a binary GCS with encoding matrix $\Bb$ and the index set of responsive workers $\I$, for which $\K_i\subsetneq\I$. For a contradiction, assume that for $\K_i$ we have $h\in\supp(\Bb_{(j)})\bigcap\supp(\Bb_{(l)})$. It then follows that in the $h^{th}$ entry of $\sum_{j\in\K_i}\Bb_{(j)}$, we have an integer greater than 1, which violates \eqref{cond_gen_GC_B}. Furthermore, under the assumption that $\bigcup_{\iota\in\K_i}\supp(\Bb_{(\iota)})=\N_k$, it is straightforward that \eqref{cond_gen_GC_B} holds.

We now need to show that \eqref{cond_gen_GC_B} implies the two conditions. Under the assumption that \eqref{cond_gen_GC_B} is true, it follows for each $h\in\N_k$; there is one and only one $j\in\K_i$ for which the $h^{th}$ entry of $\Bb_{(j)}$ is equal to one. If there were $H>1$ many such $j$'s or none, then the $h^{th}$ entry of $\sum_{j\in\K_i}\Bb_{(j)}$ would be $H$ or $0$ respectively, contradicting \eqref{cond_gen_GC_B}.

Since we also require the decoding vector $\ab_\I$ to be binary, we can either add (without rescaling) or ignore the computations of the workers within the given subgroup. Since the objective is to meet \eqref{GC_condition}, by \eqref{cond_gen_GC_B} we simply need to sample and add the corresponding encoding rows of $\K_i$. This is done by the decoding vector $\ab_{\I}=\sum_{j\in\K_i}\eb_{j}$.
\end{proof}

The following lemma is a direct generalization of \eqref{bound_TLDK} from \cite[Theorem 1]{TLDK17}, which considers the balanced case. Our proposed scheme meets the lower bound with equality, which implies a minimized total computational load across the network, i.e., $\Bb$ is as sparse as possible for a GCS that is resilient to $s$ stragglers. We attain a minimal total load balance, while ensuring that we are as balanced as possible when $(s+1)\nmid n$.

\begin{Lemma}
\label{min_load_lem}
The total computational load of any GCS that is resilient to $s$ stragglers; is at least $k\cdot(s+1)$, i.e., $\nnz(\Bb)\geq k\cdot(s+1)$.
\end{Lemma}

\begin{proof}
In order to tolerate $s$ stragglers, each of $\{\D_j\}_{j=1}^k$ needs to be allocated to at least $s+1$ workers, thus $\|\Bb^{(i)}\|_0\geq s+1$ for each $i\in\N_k$. Since we have $k$ partitions, it follows that the total load is at least $k\cdot(s+1)$.
\end{proof}

\subsection{Close to Uniform Assignment Distribution}
\label{cl_un_ass_distr_sec}

A drawback of the GCS proposed in Section \ref{BGC_sec} is that the load assignments can have a wide range depending on how small $r$ is compared to $s+1$. This is due to the lighter load assigned to the workers in the remainder block; which is of size $r$. The uneven workload is the cost we pay for dropping the assumption $(s+1)\mid n$, which does not often hold for a pair of two arbitrary positive integers $(s,n)$\footnote{For fixed $n$ and random $s\in\{0,\cdots,n-1\}$; we have $(s+1)\mid n$ with probability $\frac{\sigma_0(n)-2}{n}$, where $\sigma_0$ is the divisor function of the $0^{th}$ power.}. It is worth mentioning that for small $s$ and a fixed $n$ for which $(s+1)\nmid n$, when considering the original FRC scheme \cite{TLDK17}, one can easily modify $s'$ and decrease $n$ to $n'$ in order to meet the divisibility condition $(s'+1)\mid n'$.

In order to have a close to balanced GCS, we wish that the partitioning from Proposition \ref{general_cond_prop} is done so that $\big||\K_j|-|\K_l|\big|\leq1$; for all $j,l\in\N_{0,s}$, and $\big|\supp(\Bb_{(j')})-\supp(\Bb_{(l')})\big|\leq1$ for all $j',l'\in\K_{\iota}$; for each $\iota\in\N_{0,s}$. We note that in many applications, when a large $k$ is considered, the difference of one between the load of the workers within the same subset $\K_\iota$, can be made insignificant. These integer load differences can be made relatively small, if the work is divided into many small units; e.g., by increasing $k$. In the context of computing gradients, this is often the case when it is taken over a large dataset.

In order to appropriately define a close to balanced assignment, in the case where $(s+1)\nmid n$, we use the following definition of $d_s(\Bb)$; which gives a measure of how far from perfectly balanced the assignments of $\Bb$; in terms of the bound \eqref{bound_TLDK}.

\begin{Def}
\label{def_unif}
Define $d_s(\Bb)\coloneqq\sum\limits_{i=1}^n\left|\|\Bb_{(i)}\|_0-\frac{k}{n}(s+1)\right|$ for $\Bb\in\Z^{n\times k}$. This function measures how far the task allocations $\left\{\|\Bb_{(i)}\|_0\right\}_{i=1}^n$ are from being \textbf{uniform}, i.e., $\|\Bb_{(i)}\|_0= \floor{\frac{k}{n}(s+1)+\frac{1}{2}}$ for all $i\in\N_n$. Furthermore, $\{\|\Bb_{(i)}\|_0\}_{i=1}^n$ is uniform; i.e., all elements are equal, if and only if $d_s(\Bb)=0$.
\end{Def}

The objective of our proposed approach is to then give a binary solution to the optimization problem with rational constraints:
\begin{equation*}
\begin{aligned}
\OPR \qquad \arg\min_{\Bb\in\Q^{n\times k}} \quad & \big\{d_s(\Bb)\big\}\\
\textrm{s.t.} \ \ & \nnz(\Bb)=k\cdot(s+1)\\
& \bigsqcup_{i=0}^{s}\K_i=\N_n \ : \ \big||\K_j|-|\K_l|\big|\leq1, \ \forall j,l\in\N_{0,s} \\
& \big|\|\Bb_{(j)}\|_0-\|\Bb_{(l)}\|_0\big|\leq 1, \ \forall j,l\in \K_i, \ \forall i\in\N_{0,s}\\
& \exists\ab_\I\in\Q^n \text{ w/ } \supp(\ab_\I)\subseteq\I; \ \forall \I\in\I_f^n \ : \ \ab_\I^T\Bb=\bold{1}_{1\times k}
\end{aligned}\ ,
\end{equation*}
whose solutions yield \textit{almost} perfectly balanced GC schemes, with minimum total computational load across the network. The first constraint corresponds to the minimal total load of a GCS that is resilient to $s$ stragglers; from Lemma \ref{min_load_lem}. In order to have approximately equal cardinality across all partitions of the workers, we impose the second constraint; which is required by generalized fractional repetition coded computing schemes (Definition \ref{gen_FRC_def}). The third ensures that there is almost perfect balance among workers within the same partition $\K_i$; for each $i\in\N_{0,s}$. Together, the second and third constraints ensure that the maximum load across all workers is minimized, for the parameters $n$ and $s$. The fourth constraint imposes condition \eqref{GC_condition}, to guarantee that $\Bb$ is a generator matrix of a valid GCS.

When incorporating the additional constraint that the encoding-decoding pair is binary, we get the following binary integer program:
\begin{equation*}
\begin{aligned}
\IPB \qquad \arg\min_{\Bb\in\{0,1\}^{n\times k}} \quad & \big\{d_s(\Bb)\big\}\\
\textrm{s.t.} \ \ & \bigsqcup_{i=0}^{s}\K_i=\N_{0,n-1} \ : \ \big||\K_j|-|\K_l|\big|\leq1, \ \forall j,l\in\N_{0,s} \\
& \big|\|\Bb_{(j)}\|_0-\|\Bb_{(l)}\|_0\big|\leq 1, \ \forall j,l\in \K_i, \ \forall i\in\N_{0,s}\\
& \sum_{j\in\K_i}\Bb_{(j)}=\bold{1}_{1\times k}, \ \forall i\in\N_{0,s}
\end{aligned}\ .
\end{equation*}
The reduction from $\OPR$ to $\IPB$ when requiring that $\Bb$ and $\ab_\I$ are over $\{0,1\}$, is shown in the proof of Theorem \ref{thm_optimality} in Appendix \ref{app_sect_4}.

\subsection{Minimum Maximum Load of Workers in a Binary FRC Scheme}
\label{min_max_load_subsec}

Next, we show what the minimum maximum load over all workers of a binary GCS is, in order to construct a valid GCS for any pair of parameters $(s,n)$. Let $n=\ell\cdot(s+1)+r$ with $\ell=\floor{\frac{n}{s+1}}$. Note that $r\equiv n\bmod(s+1)$. Similarly, let $r=t\cdot\ell+q$; which specifies the Euclidean division of $r$ by $l$. Therefore, $n=\ell\cdot(s+t+1)+q$. In a particular case, we will also need the parameters specified by the division of $n$ by $(\ell+1)$. Let $n=\lambda\cdot(\ell+1)+\rt$ (if $\ell=s-r$, then $\lambda=s$). For clarity, similar to the work of \cite{TLDK17} which we are extending, we assume that $n=k$ for our main analysis. To complement this, in Subsection \ref{close_bal_des_subsec}, we explain our approach to handling the cases where $n\neq k$. The discussion in the aforementioned subsection also serves as a summary of our proposed encoding algorithm. To summarize, we have
\begin{equation}
\label{eq_1}
  n=\ell\cdot(s+1)+r \qquad \ 0\leq r<s+1
\end{equation}
\begin{equation}
\label{eq_2}
  r=t\cdot\ell+q \qquad \qquad \ \ \ 0\leq q<\ell \ \ \ind
\end{equation}
\begin{equation}
\label{eq_3}
  n=\lambda\cdot(\ell+1)+\rt \qquad \ 0\leq \rt<\ell+1
\end{equation}
where all terms are nonnegative integers.

We have already established that in order to tolerate $s$ stragglers; and not more than $s$ in the worst case, the workers need to be partitioned into $s+1$ subsets. In order for each of the subsets to have approximately the same size, for the partitioning $\bigsqcup_{i=0}^{s}\K_i=\N_n$, we assign each $\K_i$ either $\floor{\frac{n}{s+1}}$ or $\ceil{\frac{n}{s+1}}$ workers, i.e., $|\K_i|=\ell$ or $|\K_i|=\ell+1$ for all $i\in\N_{0,s}$. Note that on average, in partitions comprised of fewer workers, we need to allocate more data partitions to each worker; in order to compensate for the fact that fewer workers are needed to collectively compute all partial gradients $\{g_j\}_{j=1}^k$. Consequently, we do not want to assign less that $\ell$ workers to any partition $\K_i$. This observation corresponds to the second constraint of $\OPR$.

For what will follow, we set the partitions as
\begin{equation}
\label{congr_classes_Ki}
  \K_i=\big\{i+z\cdot (s+1):z\in\N_0\big\}\bigcap\N_{0,n-1}
\end{equation}
for each $i\in\N_{0,s}$, hence $|\K_\iota|=\ell+1$ for $\iota\in\{0,1,\ldots,r-1\}$ and $|\K_j|=\ell$ for $j\in\{r,\ldots,s\}$.\footnote{These are precisely the \textit{congruence classes} we will be referring to later.} By what was discussed above, on average; the workers corresponding to the subsets $\{\K_j\}_{j=r}^s$ will be allocated greater loads. This setup leads to the following statements. We first prove a simple proposition, which conveys the main idea behind our approach, and then provide a theorem with a precise lower bound on the load needed by the worker in the fractional repetition GCS who carries the maximum computation load, according to the parameters determined by \eqref{eq_1} and \eqref{eq_2}. This bound is achieved through our proposed scheme.

\begin{Prop}
\label{prop_mim_max_load}
Considering any binary generalized fractional repetition GCS of $n$ workers which tolerates $s$ stragglers, the minimum maximum load of any worker is $\big\lceil k/\floor{\frac{n}{s+1}} \big\rceil$. Specifically, given a generalized FRC encoding matrix $\Bb\in\{0,1\}^{n\times k}$, there always exists an $i\in\N_n$ for which $\|\Bb_{(i)}\|_0 \geq \big\lceil k/\floor{\frac{n}{s+1}} \big\rceil$.
\end{Prop}

\begin{proof}
By \eqref{eq_1}, to have a close to balanced partitioning according to the second constraint of $\OPR$, all worker partitions need to satisfy $|\K_i|\geq\floor{\frac{n}{s+1}}$, i.e., $\ell\geq\floor{\frac{n}{s+1}}$ where $\ell$ is precisely the minimum cardinality among all $\{\K_i\}_{i=0}^{s}$. Achievability follows from the assignment described above, where $|\K_\iota|=\ell+1$ for $\iota\in\{0,1,\ldots,r-1\}$ and $|\K_j|=\ell$ for $j\in\{r,\ldots,s\}$. For all worker partitions with cardinality $\ell$, it then follows that the maximum computational load among the workers of these subgroups, is at least $\ceil{\frac{k}{\ell}}$.

Now, any other worker partition which does not have cardinality $\ell$, has cardinality $\ell+1$. Following the same argument as above, the maximum computational load among the workers of these subgroups, is at least $\ceil{\frac{k}{\ell+1}}$. Since $\ceil{\frac{k}{\ell}} \geq \ceil{\frac{k}{\ell+1}}$, the binary fractional repetition GCS encoding matrix $\Bb$, satisfies
\begin{equation*}
\label{max_min_load_bd}
  \min \arg\max_{i\in\N_n} \quad \big\{\|\Bb_{(i)}\|_0\big\} \geq \Big\lceil k\big/\big\lfloor\frac{n}{s+1}\big\rfloor \Big\rceil\ .
\end{equation*}
\end{proof}

\begin{Thm}
\label{thm_mim_max_load}
Considering binary generalized FRC schemes of $n$ workers which tolerates $s$ stragglers; with $k=n$, the minimum maximum load of any worker is at most $s+t+2$. Specifically, given a FRC encoding matrix $\Bb\in\{0,1\}^{n\times n}$, there always exists an $i\in\N_n$ for which $\|\Bb_{(i)}\|_0\geq s+t+1$ when $q=0$; and $\|\Bb_{(i)}\|_0\geq s+t+2$ when $q>0$.
\end{Thm}

\begin{proof}
Consider a partition $\K_l$, for which $L=|\K_l|$ such that $L\geq\ell+1$. By Corollary \ref{eq_formulation_cor}, it follows that $\sum_{l'\in\K_l}\|\Bb_{(l')}\|_0=k$, and the allocation within the subset of workers indexed by $\K_l$ can be done so that each row indexed by $l'\in\K_l$ has support of size $\floor{\frac{k}{L}}$ or $\ceil{\frac{k}{L}}=\floor{\frac{k}{L}}+1$, for which
\begin{equation}
  \bfloor{\frac{k}{L}} \leq \bceil{\frac{k}{L}} \leq \bceil{\frac{k}{\ell+1}} = \bceil{\frac{n}{\ell+1}} \leq \bceil{\frac{n}{\ell}} \leq s+2
\end{equation}
i.e., $\|\Bb_{(l')}\|_0\leq s+t+2$ for all $l'\in\K_l$.

This shows that it suffices to consider the worker partitions of smaller sizes --- subsets $\{\K_j\}_{j=r}^s$; which have cardinality $|\K_j|=\ell<\ell+1$. Without loss of generality, we consider the partitioning \eqref{congr_classes_Ki}. For this partitioning, the $n^{th}$ worker is now assigned the index $0$, i.e., $\Bb_{(0)}$ corresponds to $\Bb_{(n)}$.

For a contradiction, assume that $\|\Bb_{(j')}\|_0\lneq s+t+2$ for all $j'\in\K_j$, in the case where $q>0$. It then follows that
\begin{equation}
  \sum_{i'\in\K_j}\|\Bb_{(j')}\|_0 \leq \ell\cdot(s+t+1) = \ell\cdot(s+1)+\ell\cdot t \lneq \ell\cdot(s+1)+\ell\cdot t+q = k
\end{equation}
where the last equality follows from \eqref{eq_1}, \eqref{eq_2} and $n=k$. Since $q>0$, we conclude that $\sum_{i'\in\K_j}\|\Bb_{(j')}\|_0<k$, which contradicts Corollary \ref{eq_formulation_cor}. Hence, there is at least one $j'\in\K_j$ for which $\|\Bb_{(j')}\|_0\geq s+t+2$.

Similarly, assume that $\|\Bb_{(j')}\|_0\lneq s+t+1$ for all $j'\in\K_j$, in the case where $q=0$ and $\ell>0$. It then follows that
\begin{equation}
  \sum_{i'\in\K_j}\|\Bb_{(j')}\|_0 \leq \ell\cdot(s+t) \lneq \ell\cdot(s+t)+\ell = \ell\cdot(s+1)+\ell\cdot t = k
\end{equation}
which contradicts Corollary \ref{eq_formulation_cor}. This completes the proof.
\end{proof}

It is worth noting that in the case where $(s+1)\mid n$; we have $r=t=q=0$, and Theorem \ref{thm_mim_max_load} reduces to bound \eqref{bound_TLDK} derived in \cite{TLDK17} for perfectly balanced schemes.

\section{Proposed Binary Gradient Coding Scheme}
\label{proposed_scheme_section}

In this section, we present our proposed binary coding scheme, along with its main properties. First, we present the construction of the encoding matrix $\Bb$; in Subsections \ref{bin_gc_sec}, \ref{subsec_B_C1} and \ref{subsec_B_C2}. In Subsection \ref{dec_vector_sec} we present an efficient online construction of the corresponding decoding vector $\ab_\I$. In Subsections \ref{val_opt_subsec} and \ref{n_gr_s2_subsec} we show that our proposed $\Bb$ and certain variants of it are as close to being uniform as possible; according to $\OPR$, and that in the regime $n\geq s^2$ the gap from being perfectly balanced is negligible. Finally, in Subsection \ref{heter_case_sec} we provide an analysis which determines how to appropriately allocate the assignments when the workers are heterogeneous, so that they have the same expected completion time --- this relaxes the close to uniform assignment.

\subsection{Close to Balanced Encoding Design}
\label{close_bal_des_subsec}

In this subsection, we give the approach of our proposed Algorithms \ref{B_ones_unb_unif_C1} and \ref{B_ones_unb_unif_C2} for the general cases where $n\neq k$. This also serves as a summary of Algorithms \ref{B_ones_unb_unif_C1} and \ref{B_ones_unb_unif_C2}. The decoding method, provided in Algorithm \ref{determine_aI}, remains the same for the cases where $n\neq k$. Note that Step 1 in which we partition $\N_n$ corresponds to the second constraint of $\OPR$, and Steps 2 and 3 in which we partition $\N_k$ according to the grouping of the worker subgroups corresponds to the third constraint of $\OPR$. In Steps 2 and 3, we assign an approximately uniform support to the rows of the GC encoding matrix $\Bb$.\\

\noindent \textbf{Step 1:} Partition the index set $\N_n$ of the $n$ workers into $s+1$ disjoint subgroups $\{\K_i\}_{i=0}^s$. According to \eqref{eq_1}, $r$ of these subgroups have cardinality of size $\ell+1$, while the remaining $s+1-r$ subgroups have cardinality of size $\ell$.\\

\noindent \textbf{Step 2:} Define $\zeta_1 \coloneqq \floor{\frac{k}{\ell+1}} = \Big\lfloor k\big/\big(\big\lfloor\frac{n}{s+1}\big\rfloor+1\big) \Big\rfloor$. We then assign to the $r$ groups with lighter loads, i.e., index subgroups with cardinality $\ell+1$, a load of $\zeta_1$ or $\zeta_1+1$.\\

\noindent \textbf{Step 3:} Define $\zeta_2 \coloneqq \floor{\frac{k}{\ell}} = \Big\lfloor k\big/\big\lfloor\frac{n}{s+1}\big\rfloor \Big\rfloor$. We then assign to the $s+1-r$ groups with heavier loads, i.e., index subgroups with cardinality $\ell$, a load of $\zeta_1$ or $\zeta_1+1$.\\

\begin{Rmk}
The approach through Steps 1-3, results in a close to balanced encoding matrix. Analogous to Theorem \ref{thm_mim_max_load}, we have $\zeta_1\leq\|\Bb_{(i)}\|_0\leq\zeta_2+1$, hence, the maximum load difference between any two workers is at most $\zeta_2-\zeta_1+1$. Furthermore, compared to bound \eqref{bound_TLDK}, the workers determined by Step 2 have an assignment load lower bound of $\zeta_1\geq\bfloor{\frac{k}{n+s+1}(s+1)}$, and those determined by Step 3 a lower bound of $\zeta_2\geq\bfloor{\frac{k}{n}(s+1)}$. It is worth noting that the latter lower bound is the floor of the bound determined in \cite{TLDK17}, while the former has the additive term of $s+1$ in the denominator; which is an artifact of the fact that we are considering a binary encoding matrix which is as sparse as possible. Furthermore, in practice, we would not have $k$ being significantly larger than $\ell$, which implies that $\zeta_2\approx\zeta_1$.
\end{Rmk}

To simplify the analysis that will follow, and draw direct comparisons to the work of \cite{TLDK17}, we will be assuming that $n=k$. This can always be met by the central server, who partitions $\D$ that is of size $N$; before distributing the parts $\{\D_j\}_{j=1}^k$. Since $N$ and $n$ are fixed, the central server can partition $\D$ according to the following procedure:
\begin{enumerate}[label=(\Roman*)]
  \item set $k=n$, $\taut = \floor{\frac{N}{n}}$, and $r'\equiv N\mod k$,
  \item partition $\D=\bigsqcup_{j=1}^{k} \D_j$ s.t.: $\{\D_j\}_{j=1}^{r'}$ are of size $\taut+1$, and $\{\D_j\}_{j=r'+1}^{k}$ are of size $\taut$,
\end{enumerate}
which guarantees that any two parts either have the same number of data points; or one of them has at most one additional point.

It is straightforward to verify that (I)-(II) partition the entire dataset $\D$ as desired. Since we have $r'$ parts of size $\taut+1$ and $k-r'$ parts of size $\taut$, it follows that
\begin{equation*}
  (\taut+1)\cdot r'+\taut\cdot(k-r') = \taut\cdot(r'+k-r')+r' = \taut\cdot k + r' = \taut\cdot n + r' = N
\end{equation*}
hence all $N$ data points of $\D$ are accounted for, only once in $\bigcup_{j=1}^k\D_j$. The assumption that $n=k$ is therefore not restrictive, as it can be incorporated into the coded computing scheme.

\subsection{Encoding Matrix}
\label{bin_gc_sec}

The idea is to work with congruence classes $\bmod(s+1)$ on the set of the workers' indices $\N_n$, in such a way that the workers composing a congruence class are roughly assigned the same number of partitions (differing by no more than one), while all partitions appear exactly once in each class. By \textit{congruence class} we simply mean the set of integers $j\in\N_n$ which are equivalent $\bmod\ (s+1)$. The classes are denoted by $\left\{[i]_{s+1}\right\}_{i=0}^s$. One could use a random assignment once it is decided how many partitions are allocated to each worker. However, in order to get a deterministic encoding matrix, we assign the partitions in ``blocks'', i.e., submatrices consisting of only $1$'s. With this setup, condition \eqref{cond_gen_GC_B} becomes
\begin{equation}
\label{sum_cond_BGC}
  \sum\limits_{j\equiv c\bmod(s+1)}\Bb_{(j)}=\bold{1}_{1\times k}
\end{equation}
for each $c\in\N_{0,s}$. As in the proof of Theorem \ref{thm_mim_max_load}, we reassign the index of the $n^{th}$ worker; so that $\Bb_{(0)}$ corresponds to $\Bb_{(n)}$.
 
In the proposed GCS, the encoding is identical for the classes $\Cfr_1\coloneqq\left\{[i]_{s+1}\right\}_{i=0}^{r-1}$, and is also identical for the classes $\Cfr_2\coloneqq\left\{[i]_{s+1}\right\}_{i=r}^{s}$. The objective is to design $\Bb$ to be as close to a block diagonal matrix as possible, and  we do so by ensuring that the difference in the load assignments between any two servers within the same set of classes $\Cfr_1$ or $\Cfr_2$; is at most one. We refer to the $l$ disjoint sets of consecutive $s+1$ rows of $\Bb$ as \textit{blocks}, and the submatrix comprised of the last $r$ rows as the \textit{remainder block}. Note that in total we have $\ell+1$ blocks, including the remainder block, and that each of the first $\ell$ blocks have workers with indices forming a complete residue system. We will present the task assignments for $\Cfr_1$ and $\Cfr_2$ separately. A numerical example where $n=k=11$ and $s=3$, is presented in Appendix \ref{example_app}.

\subsection{Repetition Assignment for Classes $0$ to $r-1$}
\label{subsec_B_C1}

In our construction each of the first $r$ residue classes also have an assigned row in the remainder block, such that we could assign $r$ partitions to the last worker of each class in $\Cfr_1$, and evenly assign $s+1$ to all other workers corresponding to $\Cfr_1$. Our objective though is to distribute the $k$ tasks among the workers corresponding to the $\ell+1$ blocks as evenly as possible, for the congruence classes corresponding to $\Cfr_1$, in such a way that \textit{homogeneous} workers have similar loads. By homogeneous we mean the workers have the same computational power, which implies that they exhibit independent and identically distributed statistics for the computation time of similar tasks.

Note that $n=(\ell+1)\cdot s+(\ell+r-s)$. Hence, when $\ell>s-r$, we can assign $s+1$ tasks to each worker in the first $\ell+r-s$ blocks, and $s$ tasks to the workers in the remaining $s+1-r$ blocks. In the case where $\ell\leq s-r$, we assign $\lambda+1$ tasks to the first $\rt$ blocks and $\lambda$ tasks to the remaining $\ell+1-\rt$ blocks. It is worth pointing out that $\lambda=s$ and $\rt=0$ when $\ell=s-r$, which means that every worker corresponding to $\Cfr_1$ is assigned $\lambda=s$ tasks, as $n=(\ell+1)\cdot s$.

A pseudocode for this encoding process is presented in Algorithm \ref{B_ones_unb_unif_C1}, in Appendix \ref{appendix_Enc_matr}. For coherence, we index the rows by $i$ starting from 0, and the columns by $j$ starting from $1$. We point out that when $\ell>s-r$, we have $\lambda=s$ and $\rt=\ell+r-s>0$. In the case where $\ell\leq s-r$, we need to invoke \eqref{eq_3} which was introduced solely for this purpose, as we need the remainder $\rt$ to be nonnegative. It follows that Algorithm \ref{B_ones_unb_unif_C1} can be reduced to only include the \textbf{else if} statement; eliminating the conditional clause.

\subsection{Repetition Assignment for Classes $r$ to $s$}
\label{subsec_B_C2}

For the workers corresponding to $\Cfr_2$, we first check if $q=0$. If this is the case, the $n$ partitions are evenly distributed between these workers, i.e., each worker is assigned $(s+t+1)$ partitions; as $n=\ell\cdot(s+t+1)$ and here we are only considering $\ell$ blocks. When $0<q<r$, we assign $(s+t+2)$ tasks to each worker of $\Cfr_2$ in the first $q$ blocks, and $(s+t+1)$ to the workers in the remaining $\ell-q$ blocks. A pseudocode for the encoding process is provided in Algorithm \ref{B_ones_unb_unif_C2}, in Appendix \ref{appendix_Enc_matr}.

The final step is to \textit{combine} the encodings of the classes $\Cfr_1$ and $\Cfr_2$ to get $\Bb$. That is, we combine the outcomes $\tilde{\Bb}_{\Cfr_1}$ of Algorithm \ref{B_ones_unb_unif_C1} and $\tilde{\Bb}_{\Cfr_2}$ of Algorithm \ref{B_ones_unb_unif_C2}. One could merge the two algorithms into one or run them separately, to get $\Bb=\tilde{\Bb}_{\Cfr_1}+\tilde{\Bb}_{\Cfr_2}$.

The encoding matrix $\Bb$ is also the adjacency matrix of a bipartite graph $G=(\mathcal{L},\mathcal{R},\mathcal{E})$, where the vertices $\mathcal{L}$ and $\mathcal{R}$ correspond to the $n$ workers and the $k$ partitions, respectively. We can also vary the number of stragglers $s$ the scheme can tolerate for a fixed $n$, by trading the sparsity of $\Bb$. In other words, if $\Bb$ is designed to tolerate more stragglers, then more overall partial gradients need to be computed. This results in more computations over the network, as $|\supp(\Bb)|=k\cdot(s+1)$. Lastly, the decoding step in our scenario and any binary FRC scheme, corresponds to a largest partial matching of the subgraph $G'=(\mathcal{L}',\mathcal{R},\mathcal{E})$ of $G$, where $\mathcal{L}'$ is the vertex subset corresponding to the $n-s$ fastest workers at that given iteration. 

\subsection{Decoding Vector}
\label{dec_vector_sec}

A drawback of the binary GCS introduced in \cite{TLDK17} is that it requires solving a system of linear equations to compute $\Ab\in\R^{{{n}\choose{f}}\times n}$; which contains the decoding vectors corresponding to all possible index sets of responsive workers $\I\in\I_f^n$. Specifically, for each possible $\I$, it computes $\ab_\I\coloneqq \bold{1}_{1\times f}\cdot\Bb_{(\I)}^\dagger$, where $\Bb_{(\I)}\in\R^{f\times k}$ is the restriction of $\Bb$ to the $f$ rows of $\Bb$ corresponding to $\I$ and $\dagger$ denotes the pseudoinverse operation. Computing $\Bb_{(\I)}^\dagger$ for $k>f$ without fast matrix-matrix multiplication, requires $O(2kf^2+f^3)=O(f^3)$ operations, which is done ${{n}\choose{f}}$ distinct times. Then, matrix $\Ab$ needs to be stored and searched through at each iteration of the gradient descent procedure. Searching through $\Ab$ to find the corresponding $\ab_{\I}$ is prohibitive; as it has $\Theta(n^s)$ rows to look through. We propose a more efficient online decoding algorithm in order to mitigate this problem.

The construction of the decoding matrix in \cite{TLDK17} when using regular matrix multiplication and inversion, requires $O\left(k^3(k+2n-2s)\right)$ operations to construction the pseudoinverse of a submatrix of $\Bb$, for each of the possible ${n}\choose{s}$ index sets $\I$. Furthermore, at each iteration of gradient descent, the decoding step requires a search through the rows of $\Ab$. This comes at a high cost when compared to our online algorithm, which constructs a decoding vector in $O(n+s)$ operations, and does not require any additional storage space.

We point out that a similar decoding approach was developed independently in \cite{CP18}, which focuses on approximating the gradient rather than recovering the exact gradient. The main idea behind the two approaches is that we look at the index set of responsive workers, and then only consider the response of workers with mutually exclusive assigned partitions. While the objective of the scheme in \cite{CP18} is to form a vector which is close to $\bold{1}_{1\times k}$, we guarantee that this vector is attained once $n-s$ workers respond. By \eqref{GC_identity}, we can therefore recover the exact gradient.

In the proposed binary GCS there is no rescaling of the partial gradients taking place by encoding through $\Bb$, as the coefficients are $1$ or $0$. As a result, the proposed decoding reduces to simply summing a certain subset of the completed encoded tasks, while making sure that each partial gradient is added exactly once. To this end, among any $f$ workers who send back their computed sum of partial gradients, we need to have $\ell$ workers, $\ell=\frac{n}{s+1}\in\Z_+$ (or $\ell+1$ where $\ell=\floor{\frac{n}{s+1}}$, if $(s+1)\nmid n$), who have no common assigned partitions. We elaborate on this in the next paragraph.

If $r=0$, the decoder traverses through the $s+1$ classes consecutively to find one which is a complete residue system (Algorithm \ref{determine_aI}). This will be used to determine the decoding vector $\ab_{\I}$, implied by Corollary \ref{eq_formulation_cor}. When $r>0$, the decoder first traverses through the last $s+1-r$ congruence classes; checking only the first $\ell$ blocks. If it cannot find a complete residue system corresponding to returned tasks by non-stragglers, it proceeds to the first $r$ classes; checking also the remainder block. This extra step makes the scheme more efficient. In both cases, by the pigeonhole principle we are guaranteed to have a complete residue system, provided that $f$ completed tasks are received.

The next step is to devise a decoding vector for each of the ${n}\choose{s}$ different straggler scenarios $\I$. We associate the $i^{th}$ complete residue class with a decoding vector $\ab_i$ defined as
\begin{equation}
\label{vecs_ai}
\ab_i\coloneqq \sum_{j\in[i]_{\ell}}\eb_j \ \in\{0,1\}^{n},
\end{equation}
for $i\in\N_{0,\ell-1}$. Also, note that $\|\ab_i\|_{0}=\ell+1$ for the decoding vectors corresponding to the first $r$ classes, and $\|\ab_i\|_{0}=\ell$ for the remaining classes. In both cases, $\ab_{i+1}$ is a cyclic shift of $\ab_i$.

At each iteration the gradient is computed once $f$ worker tasks are received. Define the \textit{received indicator-vectors}
\begin{equation}
  \left(\text{rec}_{\I}\right)_i = \begin{cases} 1 \qquad \text{ if } i\in\I \\ 0 \qquad \text{ if } i\not\in\I \end{cases},
\end{equation}
for each possible $\I$, where $\|\text{rec}_{\I}\|_0=f$ and $n-\|\text{rec}_{\I}\|_0=s$. Thus, there is at least one $i\in\N_{0,\ell-1}$ for which $\supp(\ab_i)\subsetneq \supp(\text{rec}_{\I})$. If there are multiple $\ab_i$'s satisfying this property, any of them can be selected. The pseudocode is presented in Algorithm \ref{determine_aI}.

\begin{algorithm}[h]
\label{determine_aI}
\SetAlgoLined
\KwIn{received indicator-vector $\text{rec}_{\I}$}
\KwOut{decoding vector $\ab_{\I}$}
 \For{$i = s$ to $0$}
  {
   \If {$\left(\mathrm{rec}_{\I}\right)_i \equiv 1$}
    {
     $l\gets i$ \\
     \If {$\supp(\ab_l)\subseteq \supp(\mathrm{rec}_{\I})$}
     {
      $\ab\gets \ab_l$ \Comment{$\ab_l$ is defined in \eqref{vecs_ai}}\\
      break
     }
    }
  }
 \Return $\ab_{\I}\gets\ab$
 \caption{Decoding Vector $\ab_{\I}$}
\end{algorithm}

\subsection{Validity and Optimality of our GCS}
\label{val_opt_subsec}

Now that we have presented our construction, we provide the accompanying guarantees in terms of validity, and optimality, which motivated our construction.

\begin{Thm}
\label{thm_GC}
The proposed encoding-decoding pair $(\Bb,\ab_\I)$ satisfy condition \eqref{GC_condition}, for any $\I\in\I_f^n$. That is, $(\Bb,\ab_\I)$ comprise a valid GCS which tolerates up to $s$ stragglers.
\end{Thm}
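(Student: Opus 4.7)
The plan is to prove Theorem~\ref{thm_GC} in two stages: first, a combinatorial pigeonhole argument showing that Algorithm~\ref{determine_aI} always terminates and returns some $\ab_{i^\star}$ whose support is contained in $\I$; second, an algebraic verification that for every such $i^\star$ the identity $\ab_{i^\star}^T\Bb=\bold{1}_{1\times k}$ holds, after which the conclusion follows by a direct substitution into \eqref{GC_identity}.

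For the first stage, I would begin by recalling that the workers are partitioned into the $s+1$ congruence classes $\{[i]_{s+1}\}_{i=0}^{s}$, which by construction are pairwise disjoint and cover $\N_n$. Every straggler lies in exactly one such class, so the $s$ stragglers can intersect at most $s$ of the classes. By the pigeonhole principle, there exists at least one index $i^\star\in\{0,1,\dots,s\}$ such that $[i^\star]_{s+1}\subseteq\I$, i.e., no worker of the $i^\star$-th class is a straggler. Consequently $\mathrm{supp}(\ab_{i^\star})\subseteq\mathrm{supp}(\mathrm{rec}_{\I})$, and since Algorithm~\ref{determine_aI} traverses all $s+1$ residue classes, it is guaranteed to find (at least) one such index and return the corresponding $\ab_{i^\star}$.

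For the second stage, I would verify the invariant underlying the construction of Sections~\ref{subsec_B_C1} and~\ref{subsec_B_C2}: for every class $i\in\{0,\dots,s\}$, the row supports $\{\mathrm{supp}(\Bb_{(j)}):j\in[i]_{s+1}\}$ are pairwise disjoint and their union equals $\N_k$. This is the stated design goal of Algorithms~\ref{B_ones_unb_unif_C1} and~\ref{B_ones_unb_unif_C2}, which allocate disjoint blocks of consecutive column indices to the workers of a fixed class so that ``all partitions appear exactly once in each class.'' The arithmetic decompositions \eqref{eq_1}--\eqref{eq_3} guarantee that the block sizes sum to exactly $k=n$ in both the $\Cfr_1$ and $\Cfr_2$ cases, so the union of supports is complete and the disjointness is immediate from the block-diagonal placement. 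Granting this invariant, we have
\begin{equation*}
\ab_{i^\star}^T\Bb \;=\; \sum_{j\in[i^\star]_{s+1}}\Bb_{(j)} \;=\; \bold{1}_{1\times k},
\end{equation*}
so by \eqref{GC_identity},
$\ab_{i^\star}^T(\Bb\gb)=\bold{1}_{1\times k}\gb=\sum_{j=1}^{k}g_j^T=g^T$,
and exact gradient recovery is achieved for every $\I$ with $|\I|=n-s$.

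The main obstacle is the second stage: carefully confirming that the constructions in Algorithms~\ref{B_ones_unb_unif_C1} and~\ref{B_ones_unb_unif_C2} really do produce, within each congruence class, a disjoint cover of $\N_k$, across the subcases $\ell>s-r$, $\ell=s-r$, $\ell<s-r$ for $\Cfr_1$ and $q=0$ versus $0<q<r$ for $\Cfr_2$. This reduces to a bookkeeping check that the per-block load counts ($\lambda+1$ vs.\ $\lambda$, or $s+t+2$ vs.\ $s+t+1$) multiplied by the number of blocks in each regime sum to exactly $k=n$, and that consecutive blocks place their column windows end-to-end without overlap. Once that bookkeeping is in place, the pigeonhole step and the algebraic identity combine to yield the theorem.
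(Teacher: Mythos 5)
Your proof is correct and follows essentially the same route as the paper's: a pigeonhole argument guaranteeing that some congruence class $[i^\star]_{s+1}$ is fully contained in $\I$, followed by the observation that the rows of $\Bb$ within any class have pairwise disjoint supports covering $\N_k$, so their superposition is $\bold{1}_{1\times k}$ and \eqref{GC_identity} follows. The only cosmetic difference is that you phrase the pigeonhole step as ``$s$ stragglers can meet at most $s$ of the $s+1$ classes,'' whereas the paper counts the worst-case number $\nu=n-s$ of responses needed to force a complete residue system --- these are equivalent --- and your explicit attention to the disjoint-cover bookkeeping across the subcases is, if anything, more careful than the paper's bare assertion that it holds ``by construction.''
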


\begin{proof}
By our construction of $\Bb$, the rows corresponding to a congruence class are mutually exclusive and their superposition is precisely $\bold{1}_{1\times k}$.

By the pigeonhole principle the number of completed encoded tasks that is required at the decoder to guarantee a successful recovery of the gradient, denoted by $\nu$, is equal to
\begin{align}
\label{expr_nu}
  \nu &\coloneqq \ell\cdot r+(\ell-1)\cdot\big[(s+1)-r\big]+1\nonumber\\
  &= \ell\cdot(s+1)-s+r = \big[\ell\cdot(s+1)+r\big]-s=n-s\ .
\end{align}
Therefore, as long as $\nu=n-s$ many workers respond, there is at least one subset of them whose indices form a complete residue system $\bmod(s+1)$. Algorithm \ref{determine_aI} determines such an index subset, and constructs a binary vector whose support corresponds to this subset. As a result, $\ab_\I^T\Bb=\bold{1}_{1\times k}$ for any $\I$ of size $n-s$, and \eqref{GC_identity} is satisfied.
\end{proof}

Note that the total number of task assignments is $k\cdot (s+1)$, for any pair of integers $(s,n)$ where $0\leq s<n$, as expected. This is the same total load required by the MDS based schemes. Also, our GCS meets the lower bound on total task assignments of $\Bb$ implied by \eqref{bound_TLDK} and Lemma \ref{min_load_lem}
\begin{equation}
\label{bound_TLDK_B}
  \nnz(\Bb)=\sum_{i=1}^n\|\Bb_{(i)}\|_0 \geq n\cdot\frac{k}{n}(s+1)=k\cdot(s+1)
\end{equation}
with equality.

It can be observed that the runtime complexity of Algorithm \ref{determine_aI} is  $O((\ell+1)\cdot(s+1))=O(n+s)$. This complexity can be slightly reduced by the following modification. The for-loop in Algorithm \ref{determine_aI} can be stopped early by only traversing through the classes $0,\cdots,s-1$, and assigning $\ab_{\I}\gets\ab_{s}$ if none was selected. This reduces the runtime complexity to $O((\ell+1)\cdot s)$, hence our proposed decoder is significantly faster than the decoding algorithm of \cite{TLDK17}. The decoding matrix $\Ab$ of \cite{TLDK17} requires ${n}\choose{s}$ applications of a pseudoinverse for its construction, which makes it impractical for large ${n}\choose{s}$. Once this has been constructed, at each gradient descent iteration it requires an additional decoding step which involves searching through the ${n}\choose{s}$ rows of $\Ab$.

An alternative decoding is to consider a decoding of each of $\Cfr_1$ and $\Cfr_2$ separately; in a streaming fashion, and terminate whenever one of the two is completed. This decoding procedure will be especially useful in our first CMMS. Both the CMMS and the alternative decoding will be described in more detail in Subsection \ref{str_dec_subs}. As was done in \cite{TLDK17}, to simplify the presentation of the proof of Theorem \ref{thm_optimality}, we restrict our attention to the case when $n=k$.

\begin{Thm}
\label{thm_optimality}
Let $n=k$. The task allocation through $\Bb$ resulting from Algorithms \ref{B_ones_unb_unif_C1} and \ref{B_ones_unb_unif_C2} is a solution to the optimization problem $\OPR$; when we impose the restriction that $\Bb$ is binary. Specifically, $\Bb$ is a solution to $\IPB$.
\end{Thm}

Theorem \ref{thm_optimality} holds for permutations of the columns of $\Bb$, or a random assignment of tasks per class; as opposed to repeating blocks --- as long as all partitions are present only once in a single worker of each congruence class. The decoding in either of these cases remains the same. Furthermore, the proposed $\Bb$ can be viewed as an extension of the \textit{cyclic repetition} scheme introduced in \cite{TLDK17}. An example of how the allocations can be modified for each congruence class is given in Appendix \ref{example_app}. By ``valid permutation per congruence class'', we mean that a separate permutation is applied to the columns of $\Bb\big|_{[c]_{s+1}}$; the restriction of $\Bb$ to the rows corresponding to $[c]_{s+1}$, for each congruence class $c\in\N_{0,s}$.

For the purpose of the applications considered in this paper, permutations of the rows or columns of $\Bb$ do not affect the overall performance or guarantees of the proposed GCS; i.e., any permutation applied to the encoding matrix of the approach of Algorithms \ref{B_ones_unb_unif_C1} and \ref{B_ones_unb_unif_C2} would have the same result.\footnote{ In the case where a permutation is applied to the \textit{rows} of $\Bb$, the decoding vectors defined in \eqref{vecs_ai} should be modified accordingly.} A permutation of the rows corresponds to permutation of the workers' indices, and a permutation of the columns simply means the workers are assigned different partitions, but since the same number of partitions is allocated to each worker, their total workload remains the same.

In the following theorem, we provide the conditions for which an alternative binary encoding matrix $\bar{\Bb}$ can be used as an encoding matrix for a fractional repetition GCS, by comparing it to the resulting $\Bb$ of our Algorithms \ref{B_ones_unb_unif_C1} and \ref{B_ones_unb_unif_C2}. The encoding through $\bar{\Bb}$ is a valid permutation of $\Bb$, as was defined above.

\begin{Thm}
\label{cond_B_thm}
A binary encoding matrix $\bar{\Bb}$ is a valid permutation of the task allocations per congruence class of the encoding matrix $\Bb$ proposed by Algorithms \ref{B_ones_unb_unif_C1} and \ref{B_ones_unb_unif_C2}, for which Algorithm \ref{determine_aI} produces a correct decoding vector, i.e., $\ab_{\I}^T\bar{\Bb}=\bold{1}_{1\times k}$ for all possible $\I\in\I_f^n$, if:
\begin{enumerate}
  \item $\|\bar{\Bb}_{(i)}\|_0=\|\Bb_{(i)}\|_0$
  \item {\normalfont$\supp(\bar{\Bb}_{(i)})\bigcap\supp(\bar{\Bb}_{(j)})=\emptyset \quad $} if $\ \ i\equiv j\bmod(s+1)$
\end{enumerate}
for $i,j\in\N_n$ distinct. These conditions also imply that $\|\bar{\Bb}^{(i)}\|_0=s+1$ for all $i\in\N_k$.
\end{Thm}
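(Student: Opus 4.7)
The plan is to reduce the claim to the argument already established in Theorem~\ref{thm_GC}, by showing that conditions (1) and (2) preserve the key property underlying correct decoding: namely, that the rows of the encoding matrix indexed by any congruence class $[j]_{s+1}$ have pairwise disjoint supports whose union is all of $\N_k$, so that their superposition equals $\bold{1}_{1\times k}$.

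First, I would observe that the original $\Bb$ produced by Algorithms~\ref{B_ones_unb_unif_C1} and \ref{B_ones_unb_unif_C2} satisfies $\sum_{i\in[j]_{s+1}}\|\Bb_{(i)}\|_0 = k$ for every class $[j]_{s+1}$, because by construction each of the $k$ partitions is assigned exactly once per class. Condition (1) transfers this identity to $\bar{\Bb}$, giving $\sum_{i\in[j]_{s+1}}\|\bar{\Bb}_{(i)}\|_0 = k$.

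Next, I would invoke condition (2) to conclude that the supports $\{\text{supp}(\bar{\Bb}_{(i)}): i\in[j]_{s+1}\}$ are pairwise disjoint subsets of $\N_k$. Combined with the cardinality count from the previous step, this yields
\[
  \Big|\bigcup_{i\in[j]_{s+1}}\text{supp}(\bar{\Bb}_{(i)})\Big| \;=\; \sum_{i\in[j]_{s+1}}\|\bar{\Bb}_{(i)}\|_0 \;=\; k,
\]
so this disjoint union, being a $k$-element subset of $\N_k$, must equal $\N_k$. Hence $\sum_{i\in[j]_{s+1}}\bar{\Bb}_{(i)} = \bold{1}_{1\times k}$ for every congruence class, which is precisely the identity exploited in the proof of Theorem~\ref{thm_GC}.

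Finally, the pigeonhole argument of Theorem~\ref{thm_GC} transfers verbatim: an expression analogous to \eqref{expr_nu} shows that any responsive set $\I$ of size $n-s$ must contain the indices of at least one complete residue system $\bmod(s+1)$, and Algorithm~\ref{determine_aI} will locate such a class by its traversal; the corresponding decoding vector $\ab_{\I}$ defined as in \eqref{vecs_ai} therefore satisfies $\ab_{\I}^T\bar{\Bb}=\bold{1}_{1\times k}$. The main (and mild) subtlety is noting that Algorithm~\ref{determine_aI} only inspects the $\ab_i$'s (which depend on class membership, not on which specific column each $1$ occupies), so any within-class permutation of the columns of $\Bb$ that preserves conditions (1) and (2) leaves both the algorithm's traversal and the resulting decoding identity intact.
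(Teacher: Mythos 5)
Your proof is correct and follows essentially the same route as the paper's: condition (1) transfers the per-class total weight of $k$ from $\Bb$ to $\bar{\Bb}$, condition (2) gives pairwise disjoint supports, and together these force the superposition over each congruence class to equal $\bold{1}_{1\times k}$, after which the pigeonhole/decoding argument of Theorem~\ref{thm_GC} applies unchanged. If anything, your write-up states the counting step (a weight-$k$ disjoint union inside $\N_k$ must be all of $\N_k$) more cleanly than the paper does.
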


\subsection{Distribution of Assignments for $n\geq s^2$}
\label{n_gr_s2_subsec}

Considering the identities \eqref{eq_1}, \eqref{eq_2} and \eqref{eq_3}, note that for $\ell>r$ we have $t=0$ and $r=q$. Furthermore, when $\ell=s$ we have $n=s\cdot(s+1)+r\approx s^2$, and in the regime $n\geq s^2$, we can show that $t$ is at most 1. Then, the gap between the heaviest and lightest loads; respectively $s+t+2$ and $s$, is at most 3.

\begin{Lemma}
Let $n=s^2+a$ for $a\in \N_0$ and $s<n$. Then, we have $t=1$ only when $a=s-2,s-1$ or $2s$. Otherwise, $t=0$.
\end{Lemma}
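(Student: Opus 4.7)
The plan is to exploit the fact that $t$ is the integer quotient $\lfloor r/\ell \rfloor$, where $r\in\{0,1,\ldots,s\}$ is a small residue while $\ell$ grows roughly like $n/(s+1)$. Since $r\le s$, whenever $\ell \ge s+1$ we automatically get $t=0$, so all the interesting behaviour is concentrated in the small range $\ell\in\{s-1,s\}$. I would therefore begin by pinning down which values of $\ell$ are compatible with the hypothesis $n\ge s^2$, and then, for each admissible $\ell$, translate the condition $r\ge \ell$ back into a condition on $a$.

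First I would write $n=s^2+a = (s+1)(s-1)+(a+1)$, which exposes $\ell$ and $r$ directly via Euclidean division. A short case analysis on the size of $a+1$ modulo $s+1$ then yields three regimes:

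\begin{itemize}
\item If $0\le a \le s-1$, so that $0\le a+1\le s$, then $\ell=s-1$ and $r=a+1$.
\item If $s\le a \le 2s$, so that $s+1\le a+1\le 2s+1$, then $\ell=s$ and $r=a-s$.
\item If $a\ge 2s+1$, then $\ell\ge s+1$ and, because $r<s+1\le \ell$, we get $t=0$.
\end{itemize}

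With these closed-form expressions in hand, the condition $t\ge 1$ becomes the concrete inequality $r\ge \ell$, which in the first regime reduces to $a\ge s-2$ (picking out $a=s-2,s-1$) and in the second regime reduces to $a\ge 2s$ (picking out only $a=2s$). In each of these three cases a direct computation shows $r<2\ell$, so in fact $t=1$ rather than $t\ge 2$. For every other value of $a$ one of the inequalities fails, giving $t=0$.

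The only genuine obstacle is the boundary behaviour when $\ell$ is so small that $r$ could exceed $2\ell$, which would produce $t\ge 2$ instead of $t=1$. This can only occur when $\ell\le s/2$, and the inequality $n\ge s^2$ rules this out for all $s\ge 3$ (since it forces $\ell\ge s-1$). The constraint $s<n$ together with $n\ge s^2$ therefore keeps us inside the regime where $t\in\{0,1\}$, so the conclusion follows.
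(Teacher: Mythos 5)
Your proof is correct and follows essentially the same route as the paper's: an explicit Euclidean division of $n=s^2+a$ by $s+1$ over three ranges of $a$, followed by comparing the resulting $r$ with $\ell$ to decide whether $t\geq 1$. The only substantive difference is that you additionally verify $r<2\ell$ so that $t$ is exactly $1$ rather than $\geq 2$ (and correctly flag the small-$s$ boundary case where this could fail), a point the paper's proof leaves implicit.
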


\begin{proof}
We break up the proof into three cases:
\vspace{3mm}

\noindent \underline{Case $a\in\{0,\cdots,s-3\}$}: For $\alpha = s-a\in\{3,4,\cdots,s\}$:
\begin{equation}
  n = s\cdot(s+1)-\alpha = \overbrace{(s-1)}^{\ell}\cdot(s+1)+\overbrace{(s+1-\alpha)}^r,
\end{equation}
and $\ell>r$ for any $\alpha$. Thus, $t=0$ and $r=q$.
\vspace{3mm}

\noindent \underline{Case $a\in\{s,\cdots,2s-1\}$}: Let $n=s^2+a=s^2+(s+\beta)$ for $\beta\in\{0,\cdots,s-1\}$. Then $n=s\cdot(s+1)+\beta$ implies $\ell=s$ and $r=\beta$. Since $r<\ell$, it follows that $t=0$ and $r=q$.
\vspace{3mm}

\noindent \underline{Case $a\gneq 2s$}: The final case to consider is $a=2s+\gamma$, for $\gamma\in\Z_+$. The resulting parameters are $r=q=\rem\big(\rem\big(\gamma,s+1\big)-1,s+1\big)$, $\ell=(s^2+2s+\gamma-r)/(s+1)$ and $t=0$.
\vspace{3mm}

When $\alpha=1$ it follows that $r=s$ and $\ell=s-1$, thus $t=1$ and $q=1$. For $\alpha=2$ we get $r=\ell=s-1$, hence $t=1$ and $q=0$. For both $\alpha=1$ and $\alpha=2$; $t=1$ is a consequence of $r\geq \ell$. In addition, when $\beta=s$ we have $r=\ell=s$; thus $t=1$ and $q=0$.
\end{proof}

\subsection{Task Allocation to Heterogeneous Workers}
\label{heter_case_sec}

We now discuss how to allocate the partitions when the workers are of \textit{heterogeneous} nature, such that all workers have the same expected execution time. This analysis may be needed in applications with very discrete, indivisible jobs. In cases where the work can be divided more finely, for any given subset of workers who are to share the total work, one can simply divide the work to be done among the workers in proportion to their computational strengths, to equalize the expected completion time. We present the case where we have two groups of machines, each consisting of the same type. By equalizing the expected completion time, we reduce the variance of the expected response times across the servers, for any possible set of $n-s$ non-straggling servers. The analysis for more than two groups of machines can be done in a similar fashion.

The two types of workers are denoted by $\T_i$; with a total of $\tau_i$ machines, and their expected execution for computing $g_j$ (for equipotent $\D_j$'s) by
\begin{equation}
  t_i\coloneqq\E\left[\text{time for } \T_i \text{ to compute } g_j\right],
\end{equation}
for $i\in\{1,2\}$, where $t_1\lneq t_2$; i.e., machines $\T_1$ are faster. Let $|\J_{\T_i}|$ denote the number of partitions each worker of $\T_i$ is assigned. The goal is to find $|\J_{\T_1}|$ and $|\J_{\T_2}|$ so that
\begin{equation}
\label{expecation_condition}
  \E\left[\T_1 \text{ compute their task}\right]=\E\left[\T_2 \text{ compute their task}\right],
\end{equation}
implying $t_1\cdot|\J_{\T_1}|=t_2\cdot|\J_{\T_2}|$. Hence $|\J_{\T_1}|\gneq|\J_{\T_2}|$, as $t_1\lneq t_2$. Let $\tau_1=\frac{\alpha}{\beta}\cdot\tau_2$ for $\frac{\alpha}{\beta}\in\Q_+$ in reduced form. Since $\tau_1+\tau_2=n$, it follows that
\begin{equation}
  \tau_1=\frac{\alpha}{\alpha+\beta} n \qquad \text{ and } \qquad \tau_2=\frac{\beta}{\alpha}\tau_1=\frac{\beta}{\alpha+\beta} n.
\end{equation}
To simplify the presentation of the task assignments, we assume $(s+1)\mid n$. If $(s+1)\nmid n$, one can follow a similar approach to that presented in Subsection \ref{bin_gc_sec} to obtain a close to uniform task allocation; while \textit{approximately} satisfying \eqref{expecation_condition}.

The main idea is to fully partition the data across the workers, such that each congruence class is comprised of roughly $\frac{\alpha}{\alpha+\beta}\cdot\frac{k}{s+1}$ workers of type $\T_1$, and $\frac{\beta}{\alpha+\beta}\cdot\frac{k}{s+1}$ workers of type $\T_2$. We want $\frac{\tau_1+\tau_2}{s+1}=\frac{n}{s+1}$ many workers for each congruence class, and
\begin{equation}
  |\J_{\T_1}|\cdot\frac{\tau_1}{s+1}+|\J_{\T_2}|\cdot\frac{\tau_2}{s+1}=k
\end{equation}
partitions to be assigned to each class. That is, the dataset $\D$ is completely distributed across each congruence class, and our GCS is designed accordingly.

Putting everything together, the following conditions determine $|\J_{\T_1}|$ and $|\J_{\T_2}|$
\begin{enumerate}[label=(\roman*)]
  \item $t_1\cdot|\J_{\T_1}|=t_2\cdot|\J_{\T_2}| \quad \iff \quad |\J_{\T_2}| = \frac{t_1}{t_2}\cdot|\J_{\T_2}|$
  \item $|\J_{\T_1}|\cdot\tau_1+|\J_{\T_2}|\cdot\tau_2=(s+1)\cdot k$
  \item $\tau_2=\frac{\beta}{\alpha}\cdot\tau_1 \quad \iff \quad \tau_1=\frac{\alpha}{\beta}\cdot\tau_2$.
\end{enumerate}
By substituting (iii) into (ii) to solve for $|\J_{\T_2}|$, and then plugging it into (i) to solve for $|\J_{\T_1}|$, we get
\begin{equation}
\label{num_jobs_1}
  |\J_{\T_1}|=(s+1)\cdot k \cdot \left(\frac{\alpha t_2}{\alpha t_2+\beta t_1}\right)\cdot\frac{1}{\tau_1}
\end{equation}
\begin{equation}
\label{num_jobs_2}
  |\J_{\T_2}|=(s+1)\cdot k \cdot \left(\frac{\beta t_1}{\alpha t_2+\beta t_1}\right)\cdot\frac{1}{\tau_2} 
\end{equation}
which we round to get appropriate numbers of assignments.

This framework may be generalized to any number of different groups of machines. Under the same assumptions, for $\T_1,\cdots,\T_m$ different groups with $t_i\lneq t_{i+1}$ for all $i\in\N_{m-1}$:
\begin{enumerate}[label=(\roman*)]
  \item $t_1\cdot|\J_{\T_1}|=t_2\cdot|\J_{\T_2}|=\cdots=t_m\cdot|\J_{\T_m}|$
  \item $|\J_{\T_1}|\cdot\tau_1+|+|\J_{\T_2}|\cdot\tau_2+\cdots+|\J_{\T_m}|\cdot\tau_m=(s+1)\cdot k$
  \item $\tau_1=\frac{\alpha_2}{\beta_2}\cdot\tau_2=\cdots=\frac{\alpha_m}{\beta_m}\cdot\tau_m$, for $\frac{\alpha_{i+1}}{\beta_{i+1}}\in\Q_+$
\end{enumerate}
need to be met. This gives us a system of $2(m-1)+1=2m-1$ equations with $m$ unknowns $\{|\J_{\T_j}|\}_{j=1}^m$, which is solvable.

\section{Binary Coded Matrix Multiplication Schemes}
\label{matr_mult_sec}

Multiplication of two matrices is one of the most common operations. Coded matrix multiplication is a principled framework for providing redundancy in distributed networks, to guarantee recovery in the presence of stragglers \cite{LSR17}. As in GC, each worker is requested to carry out some computation and encode it; before sending it back to the central server. In this section we first show how \textit{any} GCS can be used to devise a CMMS, and then present two different schemes based on our binary GCS. Like the fractional repetition scheme, our two schemes resemble replication codes. For simplicity in presentation, throughout this section we assume that $k\mid N$.

The two CMMSs have applications beyond matrix multiplication, which we discuss in Subsection \ref{comp_section}. Regarding matrix multiplication, the schemes have different trade-offs in terms of communication, storage, and computational operations, required by each worker. Depending on the application and the resources available, one may even be easier to implement compared to the other.

As pointed out in \cite{LKYSA18}, despite recent advancements in distributed gradient computations, schemes under parameters $(s,n)$ have not been developed which have a \textit{recovery threshold} (i.e., the worst case minimum number of workers that need to respond in order to recover the full gradient) less than $f=n-s$. On the other hand, many CMMSs exhibit considerably better recovery thresholds --- the optimum asymptotic recovery threshold of $\mu\nu$ for $1/\mu$ and $1/\nu$ respectively the fraction of $A$ and $B$ stored by each worker; was achieved through Polynomial Codes \cite{YMAA17}.

Improving the recovery threshold comes at the cost of trading encoding and decoding complexities, restrictions on how the matrices are partitioned, and storage. The two schemes we propose have a recovery threshold of $f=n-s$, though do not suffer from any of the aforementioned drawbacks. For simplicity in presentation, we assume that $N=\ell\cdot k$ for $\ell\in\Z_+$ and $N$ the effective dimension; which implies that we have a balanced assignment. When this is not the case, the analysis carried out in Subsection \ref{bin_gc_sec} can be applied.

\subsection{$\cmmO$ --- Outer-Product Representation}
\label{1st_sch_sec}

Consider a single central server node that has at its disposal the matrices $A\in\R^{L\times N}$ and $B\in\R^{N\times M}$, and it can distribute submatrices of $A$ and $B$ among $n$ workers; to compute their product $C=AB$ in an accelerated manner. One way of computing $C$ is to leverage the fact that
\begin{equation}
  C=\sum_{i=1}^NA^{(i)}B_{(i)}
\end{equation}
which has also been used in \cite{FJHDCG17,DFHJCG19}. Recall that $A^{(i)}$ denotes the $i^{th}$ column of $A$, and $B_{(i)}$ the $i^{th}$ row of $B$, as specified in Subsection \ref{notation_sec}. This makes the process parallelizable. To make use of this outer-product representation, we partition $A$ and $B$ each into $k$ disjoint submatrices consisting of $\tau=N/k$ columns and rows respectively, which we denote by $\At_j\in\R^{L\times \tau}$ and $\Bt_j\in\R^{\tau\times M}$ for $j=1,\cdots,k$. That is
\begin{equation}
\label{block_partition}
  A=\Big[\At_1 \ \cdots \ \At_k\Big] \quad \text{ and } \quad B=\Big[\Bt_1^T \ \cdots \ \Bt_k^T\Big]^T.
\end{equation}

The central server is then capable of distributing the pairs $(\At_j,\Bt_j)$ appropriately, with a certain level of redundancy, in order to recover $C$
\begin{equation}
  C=\sum_{j}^k\At_j\Bt_j\ .
\end{equation}
Define $X_j\coloneqq \At_j\Bt_j\in\R^{L\times M}$ for all $j$, and the matrix
\begin{equation}
\label{augm_matx_X}  
  \Xb \coloneqq \Big[X_1^T\ | \ \cdots\ | \ X_k^T\Big]^T \in \R^{kL\times M},
\end{equation}
similar to how $\gb$ was defined \eqref{g_matrix} in Section \ref{str_problem_GC}.
Recall that the main idea behind GC is to construct the pair of encoding matrix $\Bb$ and decoding vector $\ab_{\I}$, such that $\ab_{\I}^T\Bb=\bold{1}_{1\times k}$ for all ${{n}\choose{s}}$ possible index sets $\I$. A CMMS can be devised by the pair $(\Bb,\ab_{\I})$. The matrix product $C=AB$ is described as:
\begin{equation}
\label{GC_to_CMM}
  C = (\overbrace{\ab_{\I}^T\otimes\Ib_L}^{\abt_{\I}^T})\cdot((\overbrace{\Bb\otimes\Ib_L}^{\Bbt})\cdot\Xb) = \overbrace{(\underbrace{\bold{1}_{1\times k}}_{\ab_{\I}^T\Bb}\otimes\Ib_L)}^{\abt_\I^T\Bbt}\cdot\Xb = \sum_{j=1}^k X_j,
\end{equation}
where $\Bbt\in\C^{nL\times kL}$ is now the encoding matrix for the CMM, and $\abt_{\I}\in\C^{nL\times L}$ is the \textit{decoding matrix} corresponding to the non-straggler index set $\I$. Expression \eqref{GC_to_CMM} is analogous to \eqref{GC_identity}.

\begin{Thm}
\label{thm_CMM}
Any GCS can be extended to a coded matrix multiplication or addition scheme.
\end{Thm}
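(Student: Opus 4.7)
The plan is to make rigorous the construction already sketched in display \eqref{GC_to_CMM}, namely to show that lifting a GCS $(\Bb,\ab_\I)$ through a Kronecker product with $\Ib_L$ yields a valid CMMS for $C=AB$, and to observe that the argument applies verbatim when each $X_j$ is replaced by an arbitrary summand (the ``addition'' case).

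First I would fix notation. Partition $A\in\R^{L\times N}$ column-wise and $B\in\R^{N\times M}$ row-wise into the $k$ conformable blocks $\At_j,\Bt_j$ as in \eqref{block_partition}, set $X_j\coloneqq\At_j\Bt_j\in\R^{L\times M}$ so that $C=\sum_{j=1}^k X_j$, and form the stacked matrix $\Xb$ of \eqref{augm_matx_X}. The encoding step is to assign worker $i$ the pairs $\{(\At_j,\Bt_j):j\in\mathrm{supp}(\Bb_{(i)})\}$ and have it return $\sum_{j\in\mathrm{supp}(\Bb_{(i)})}\At_j\Bt_j$; stacked across $i$, these returns are exactly the rows of the block matrix $\Bbt\Xb$ with $\Bbt\coloneqq \Bb\otimes\Ib_L$.

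The key step is to verify the decoding identity. For any responsive set $\I$ of size $n-s$, set $\abt_\I\coloneqq \ab_\I\otimes\Ib_L$. By the mixed-product rule for Kronecker products,
\begin{equation}
\abt_\I^T\Bbt=(\ab_\I^T\otimes\Ib_L)(\Bb\otimes\Ib_L)=(\ab_\I^T\Bb)\otimes\Ib_L=\bold{1}_{1\times k}\otimes\Ib_L,
\end{equation}
where the last equality uses the GCS hypothesis \eqref{GC_identity}. Applying this to $\Xb$ and reading block by block gives $\abt_\I^T(\Bbt\Xb)=\sum_{j=1}^k X_j=C$, which is exactly \eqref{GC_to_CMM}. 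Moreover, since $\ab_\I$ has support inside $\I$, the decoder only combines returns from non-straggling workers, so $s$ stragglers are tolerated by the CMMS whenever they are tolerated by the GCS. The straggler threshold, task load and task overlap pattern are all inherited directly from $\Bb$.

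The only nontrivial point is the conformability of the Kronecker lifting with the block-stacked data matrix $\Xb$, which is handled by the choice of the identity block $\Ib_L$ (matching the row dimension of each $X_j$); I expect this to be routine once the block sizes are tracked. Finally, for the additive version, replace $X_j=\At_j\Bt_j$ by any summands $Y_j\in\R^{L\times M}$ with prescribed partial sum $S=\sum_{j=1}^k Y_j$; the identical Kronecker argument shows $\abt_\I^T\Bbt\Yb=S$, so every GCS also produces a straggler-robust distributed addition scheme. This yields the claim.
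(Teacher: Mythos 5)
Your proposal is correct and follows essentially the same route as the paper: the paper's proof also takes $\Bbt=\Bb\otimes\Ib_L$ and $\abt_\I=\ab_\I\otimes\Ib_L$, invokes the identity \eqref{GC_to_CMM} (which is precisely the mixed-product computation you spell out) to conclude $\abt_\I^T(\Bbt\Xb)=\sum_{j=1}^k X_j=C$, and handles addition by substituting arbitrary equi-sized summands for the outer products $\At_j\Bt_j$ in $\Xb$. Your write-up merely makes the Kronecker mixed-product step and the conformability bookkeeping more explicit than the paper does.
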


\begin{proof}
Consider a GCS $(\Bb,\ab_{\I})$, for which $\ab_{\I}^T\Bb=\bold{1}_{1\times k}$. By \eqref{GC_to_CMM} it follows that
\begin{equation}
\label{enc_dec_CMM}
  \abt_{\I}^T(\Bbt\Xb)=\sum_{j=1}^kX_j=C \ .
\end{equation}
Therefore, a CMM method $(\Bbt,\abt_{\I})$ is obtained.\\
$\ind$ For matrix addition, we simply construct $\Xb$ by augmenting the $k$ equi-sized matrices we want to add; instead of the products $\{\At_j\Bt_j\}_{j=1}^k$ in \eqref{augm_matx_X}, and we obtain a coded matrix addition scheme.
\end{proof}

Let $(\Bb,\ab_\I)$ be the encoding-decoding GC pair from Section \ref{BGC_sec}. In Theorem \ref{thm_CMM}, the resulting pair $(\Bbt,\abt_{\I})$ is a CMMS whose encoding matrix $\Bbt=\Bb\otimes\Ib_L$ represents the partition pairs $(\At_j,\Bt_j)$ as the columns of $\Bbt$; and its rows represent the $n$ workers. That is, the worker corresponding to the $i^{th}$ row of $\Bb$ receives the partition pairs corresponding to $\J_i=\supp(\Bb_{(i)})$, and is asked to send back the summation of the outer-products
\begin{equation}
\label{out_pro_identity}
  C_j\coloneqq\sum_{j\in\J_i}\At_j\Bt_j = \sum_{j\in\J_i}X_j\ .
\end{equation}

The decoding matrix $\abt_{\I}^T=\ab_{\I}^T\otimes\Ib_L$ only involves the computations of a complete residue system associated with the received workers, which are determined by $\supp(\ab_{\I})$.

The communication cost per worker which along with the storage required at the central server are the major drawbacks of this approach. Each worker will have to send back a matrix of size $L\times M$, and in the worst case, the central server will need to store $k\cdot(s+1)$ matrices of this size before it can recover $C$. The computation cost per worker is equivalent to that of multiplying two matrices, of dimensions corresponding to the block pairs. An alternative CMM decoding process overcomes the storage issue at the central server, which is described next.

\subsection{Decoding as a Streaming Process}
\label{str_dec_subs}

In the case where $(s+1)\mid n$, we can use a streaming process for the recovery of $C$. In this process, we only retain a single computation corresponding to each of the blocks of the encoding matrix $\Bb$; where $\Bb$ is now a block diagonal matrix with $\ell=\frac{n}{s+1}$ diagonal blocks of the form $\bold{1}_{(s+1)\times\floor{k/\ell}}$ or $\bold{1}_{(s+1)\times\ceil{k/\ell}}$. The process terminates once a single worker from each block has responded. The pseudocode for this procedure is given in Algorithm \ref{stream_dec}.

\begin{algorithm}[h]
\label{stream_dec}
\SetAlgoLined
\KwIn{computations $C_j$ sequentially}
\KwOut{product $C$}
\textbf{Initialize:} $C=\bold{0}_{L\times N}$, and $R=\emptyset$ the index set of the received workers' blocks\\
\While{$|R|<\ell$}
{
receive computation $C_j$\Comment{$j\in\N_n$}\\ 
$\hat{\ell}\gets \ceil{j/(s+1)}$\Comment{block index of the $j^{th}$ worker}\\  
 \If{$\hat{\ell}\notin R$}
  {
   $C\gets C+C_j$\\
   $R\gets R\cup\{\hat{\ell}\}$\\
  }
}
 \caption{Decoding in a Streaming Fashion}
\end{algorithm}

The benefit of this approach, compared to the decoding $\abt_{\I}^T=\ab_{\I}^T\otimes\Ib_L$ for $\ab_{\I}$ from Algorithm \ref{determine_aI}, is that the central server will never need to store more than double the entries of the product $C$. In the case where $(s+1)\nmid n$, we can do the exact same process by simply breaking the problem into two subroutines, one dealing with the workers whose indices correspond to the first $r$ congruence classes $\bmod(s+1)$, and the other with the workers corresponding to the remaining $s+1-r$ congruence classes. That is, we will work with $\Bb_{\Cfr_1}$ for $\ell+1$ blocks; and $\Bb_{\Cfr_2}$ for $\ell$ blocks separately. We carry out Algorithm \ref{stream_dec} in parallel for the two cases, and terminate whenever one of the two has computed $C$. Now, the central server will need to store a total number of entries no more than twice the size of matrix $C$. This decoding procedure can be done analogously for our GCS. An example with further details is provided in \ref{example_stream_dec}.

Algorithm \ref{stream_dec} takes into account \textit{which} workers have responded up to a certain instance, rather than only the \textit{total number} of workers which have responded. The recovery threshold in the worst case is $n-s$, matching that of our previous decoding procedure. On average though, considering all possible index sets $\I$ of responsive workers which correspond to a valid decoding according to Algorithm \ref{stream_dec}, less workers than the worst case of $n-s$ need to respond.

If $(s+1)\nmid n$, the worst case occurs when all workers corresponding to $\ell$ blocks of $\Bb_{\Cfr_1}$ and $\ell-1$ blocks of $\Bb_{\Cfr_2}$ respond, along with only one worker from either of the two remaining blocks. By \eqref{expr_nu}, the total number of responsive workers is $n-s$. In the best case, we need a single worker corresponding to each block of $\Bb_{\Cfr_2}$ to respond, i.e., $\ell=\floor{\frac{n}{s+1}}$ responsive workers. Similarly, if $(s+1)\mid n$, in the best case we require $\ell=\frac{n}{s+1}$ workers to respond, and in the worst case $n-(\ell-1)\cdot(s+1)+1=n-s$ many workers.

\subsection{$\cmmT$ --- Augmentation of Submatrices}
\label{2nd_sch_sec}

In a system where the main limitation is the communication load which can be handled from the workers to the central server; as well as storage of the computed task at the worker nodes, $\cmmO$ is not ideal, even with the more efficient decoding process. Next, we discuss an alternative CMMS which is superior in these aspects.

In contrast to the partitioning \eqref{block_partition} of $\cmmO$, in this scheme we partition $A$ along its rows and $B$ along its columns, as was done for the Polynomial codes in \cite{YMAA17}, i.e.,
\begin{equation}
\label{part_ccm2}
  A=\Big[\Abar_1^T \ \cdots \ \Abar_{k_1}^T\Big]^T \quad \text{ and } \quad B=\Big[\Bbar_1 \ \cdots \ \Bbar_{k_2}\Big],
\end{equation}
where $\Abar_j\in\R^{\frac{L}{k_1}\times N}$ and $\Bbar_j\in\R^{N\times \frac{M}{k_2}}$. Each worker computes the product of a submatrix of $A$ with a submatrix of $B$, and then the central server augments the received computations accordingly. 

For coherence, we let $k=k_1k_2$ for $k_1,k_2\in\Z_+$. To simplify the presentation of our scheme, we consider the case where $k_1|L$, $k_2|M$ and $(s+1)\mid k$; i.e., $\texS=L/k_1\in\Z_+$ and $\texT=M/k_2\in\Z_+$, and similar to our GCS that $n=k$. The product $C$ of the two matrices under this partitioning is equal to
$$ \begin{pmatrix} \boxed{\Abar_1\Bbar_1} & \boxed{\Abar_1\Bbar_2} & \hdots & \boxed{\Abar_1\Bbar_{k_2-1}}  & \boxed{\Abar_1\Bbar_{k_2}} \\ \boxed{\Abar_2\Bbar_1} & \ddots & & & \boxed{\Abar_2\Bbar_{k_2}} \\ \vdots & & \ddots & & \vdots \\ \boxed{\Abar_{k_1-1}\Bbar_1} & & & \ddots & \boxed{\Abar_{k_1-1}\Bbar_{k_2}} \\ \boxed{{\white1}\Abar_{k_1}\Bbar_1{\white1}} & \boxed{\Abar_{k_1}\Bbar_2} & \hdots & \boxed{ \Abar_{k_1}\Bbar_{k_2-1}} & \boxed{{\white1}\Abar_{k_1}\Bbar_{k_2}{\white1}} \end{pmatrix} $$
where we denote each product submatrix by $\Cbar_{i,j}\coloneqq\Abar_i\Bbar_j\in\R^{\texS\times\texT}$, and each block row column by
\begin{equation}
  \Cbar_i=\Big[\Cbar_{i,1}\ \cdots\ \Cbar_{i,k_2}\Big]\in\R^{\texS\times M};\ \text{for each} \ i\in\N_{k_1}.
\end{equation}
The product submatrices can be ordered in terms of the indices $i$ and $j$, e.g., through the bijection $\phi:\N_{k_1}\times\N_{k_2}\to\N_{k}$ defined as $\phi:(i,j)\mapsto(i-1)k_2+j$. Since $k=k_1k_2$, each $\Cbar_{i,j}$ corresponds to one of $k$ distinct subtasks which need to be retrieved.

As mentioned, the main benefit of $\cmmT$ when compared to $\cmmO$, is that the communication load between the $i^{th}$ worker and the central server drops by a factor of $k/|\J_i|$; when considering equipotent partitions of $A$ and of $B$. Therefore, if Algorithm \ref{determine_aI} were to be used for the decoding step, the central server would also require much less temporary storage. The workers on the other hand, need to store the entire matrix $A$.

The idea behind both the decoding Algorithms \ref{determine_aI} and \ref{stream_dec} work, under a slight modification which we explain. In the proposed GCS we dealt with vector addition, and in our first CMMS; with matrix addition. Now, we focus on submatrices of the final product, which is common in CMM \cite{LLPPR18,YMAA17}. Our decoding vector $\ab_{\I}$ will be the same, but the way we apply it is different. If Algorithm \ref{determine_aI} were to be used, every worker corresponding to the same congruence class $c\in\N_{0,s}$ communicates back the same set of computations $\{\Cbar_{i,j}\}_{\phi(i,j)\in\J_{[c]_{s+1}}}$, which sets are distinct for each congruence class. Hence, whenever a complete residue system, in terms of the workers indices, is received, then the central server will have in its possession all the computations $\{\Cbar_{i,j}:i\in\N_{k_1},j\in\N_{k_2}\}$. These computations are then rearranged in order to recover $C$.

If Algorithm \ref{stream_dec} were to be used, the same idea holds. The central server waits until at least one corresponding worker from each block, from one of the two matrices $\Bb_{\Cfr_1}$ or $\Bb_{\Cfr_2}$ has responded. Formally, we want a scheme $(\Bbt,\abt_{[I]_{s+1}})$ such that $\abt_{[I]_{s+1}}^T\Bbt=\Ib_{k_1M}$ (note that $k\texT=k_1k_2\cdot M/k_2=k_1M$) for any $\I\in\I_f^n$, where $[I]_{s+1}$ is the congruence class of the complete residue system present in $\I$. This is analogous to the GC condition $\ab_\I\Bb=\bold{1}_{k\times1}$. To summarize, the encoding process is
\begin{equation}
\label{enc_CMM_2}
  \overbrace{\big(\Ib_{k_1k_2}\otimes\bold{1}_{(s+1)\times 1}\otimes\Ib_{\texT}\big)}^{\Bbt\in\{0,1\}^{k\texT(s+1)\times k\texT}} \cdot \overbrace{\begin{bmatrix} \Cbar_1^T \\ \vdots \\ \Cbar_{k_1}^T \end{bmatrix}}^{\Cbbar^T\in\R^{k\texT\times \texS}} { = \Bbt \cdot \begin{bmatrix} \Cbar_{1,1}^T \\ \Cbar_{1,2}^T \\ \vdots \\ \Cbar_{1,k_2}^T \\ \vdots \\ \Cbar_{k_1,k_2}^T \end{bmatrix} = \begin{bmatrix} \Cbar_{1,1}^T \\ \vdots \\ \Cbar_{1,1}^T \\ \vdots \\ \Cbar_{k_1,k_2}^T \\ \vdots \\ \Cbar_{k_1,k_2}^T \end{bmatrix},}
\end{equation}
where the transpose of each submatrix $\Cbar_{i,j}$ appears $s+1$ times along the rows of the encoding $\Bbt \Cbbar^T\in\R^{(s+1)k_1M\times \texS}$, each corresponding to one of the $s+1$ potential workers that are asked to compute $\Cbar_{i,j}$. This encoding is analogous to the encoding $\Bb\gb$ in our GCS, where the partial gradients $\{g_l\}_{l=1}^k$ correspond to the submatrices $\{\Cbar_{i,j}:i\in\N_{k_1},j\in\N_{k_2}\}$. Furthermore, $\Bbt$ reveals the task allocation which is applied to the pairs $\{(\Abar_i,\Bbar_j):i\in\N_{k_1},j\in\N_{k_2}\}$. Alternatively, the matrix $\Bbt$ is constructed as described in Algorithm \ref{enc_Bt}.

\begin{algorithm}[h]
\label{enc_Bt}  
\SetAlgoLined
  \KwIn{parameters $k_1$, $k_2$, $s$ and $\texT=M/k_2$ \Comment{assume $n=k=k_1k_2$}}
  \KwOut{$\Bbt\in\{0,1\}^{k\texT(s+1)\times k\texT}$}
  \textbf{Initialize:} $\Bb=\bold{0}_{k(s+1)\times k}$\\
  \For{$j=1$ to $k$}
    {$\Bb[(j-1)\cdot(s+1)+1:j\cdot(s+1),j]=\bold{1}_{(s+1)\times1}$}
  \Return $\Bbt\gets\Bb\otimes\Ib_{\texT}$
\caption{Encoding Matrix $\Bbt$ --- $\cmmT$}
\end{algorithm}

For the decoding matrix $\abt_{[I]_{s+1}}$ constructed by Algorithm \ref{determine_aI_matrix}, we use Algorithm \ref{determine_aI} as a subroutine in order to determine which congruence class $I$ of worker indices forms a complete residue system, in the inner \textbf{if} statement. We could directly use the indicator-vector $\text{rec}_{\I}$, though working with $\ab_{\I}$ is preferred. This also reveals how the decoding is similar to that of our GCS. By our assumptions, $\Bbt$ and $\abt_{[I]_{s+1}}$ are both of size $k\texT(s+1)\times k\texT$; where $k\texT=k_1M$. As was previously mentioned, a rearrangement on $\Cbbar^T$ needs to take place to finally recover $C$. In the special case where $k_1=1$, a rearrangement is not necessary.

\begin{algorithm}[h]
\label{determine_aI_matrix}
\SetAlgoLined
  \KwIn{decoding vector $\ab_{\I}$, by invoking Algorithm \ref{determine_aI}, and design parameters $k_1$, $k$, $s$ and $\texT=M/k_1$}
  \KwOut{$\abt_{[I]_{s+1}}\in\{0,1\}^{k\texT(s+1)\times k\texT}$}
  \textbf{Initialize:} $\abt_{\I}=\bold{0}_{k(s+1)\times k}$\\
  \For{$j=1$ to $s+1$}
    {
     \If{$(\ab_{\I})_j \equiv 1$}
       {$I\gets j-1$ }
    }
  \For{$j=1$ to $k$}
    {$\abt_{\I}[(j-1)\cdot(s+1)+(I+1),j]=1$}
  \Return $\abt_{[I]_{s+1}}\gets\abt_{\I}\otimes\Ib_{\texT}$
\caption{Decoding Matrix $\abt_{[I]_{s+1}}$ --- $\cmmT$}
\end{algorithm}

Similar to GC where $\ab_\I$ is constructed such that $\ab_\I^T\Bb=\bold{1}_{1\times k}$, we constructed $\abt_{[I]_{s+1}}$ to satisfy $\abt_{[I]_{s+1}}^T\Bbt=\Ib_{k_1M}$. By this, the above pair $(\abt_{[I]_{s+1}},\Bbt_\I)$ yields
\begin{equation}
  \abt_{[I]_{s+1}}^T(\Bbt_\I \Cbbar^T) = \Ib_{k_1M}\Cbbar^T = \Cbbar^T,
\end{equation}
which implies that our CMM construction works as expected. In appendix \ref{special_case_CMM_app} we provide the analysis for the case where $k_1=1$, as a simpler version of $\cmmT$. A numerical example is depicted in Appendix \ref{example_CMM2}, for parameters $k_1=1$ and $k_2=8$.

\subsection{Comparison Between $\cmmO$ and $\cmmT$}
\label{comp_section}

We present the trade-offs, in terms of communication, storage, and computational operations required by each worker for our CMMSs in Table \ref{CMM_comp_table}. Each scheme may have different uses in practice, in which one is preferable to the other. In certain applications, one may even be easier to implement compared to the other. Also, depending on the limitations and the parameters of the system employing the matrix-matrix multiplications and the underlying application, a different CMMS may be more suitable. Both $\cmmO$ and $\cmmT$ have been applied in other coded computing schemes, in which the other cannot be utilized. The approach of $\cmmO$ was used for approximate matrix-multiplication \cite{CPH20c}, and the special case of $\cmmT$ where $k_1=1$; for matrix inversion \cite{CPH20b}. We briefly explain these applications, after comparing the trade-offs of the two schemes.

\begin{center}
\begin{table}[h]
\centering
\begin{tabular}{ |p{1cm}||p{1.9cm}|p{1.9cm}|p{2.3cm}| }
\hline
\multicolumn{4}{|c|}{Trade-Offs Between Our Two CMM Schemes} \\
\hline
\hline
 & Communication & Computation & Storage \\
\hline
$\cmmO$ & {\footnotesize$LM$} & {\footnotesize$LMN/k\cdot|\J_i|$} & {\footnotesize$\frac{N}{k}(L+M)\cdot|\J_i|$} \\
\hline
$\cmmT$ & {\footnotesize$\frac{LM}{k_1k_2}\cdot|\J_i|$} & {\footnotesize$\frac{LMN}{k_1k_2}\cdot|\J_i|$} & {\footnotesize$N\big(\frac{L}{k_1}+\frac{M}{k_2}\big)\cdot|\J_i|$} \\
\hline
\end{tabular}
\vspace{3mm}
\caption{Comparison of the communication, computation and storage required by the workers in each of our schemes.}
\label{CMM_comp_table}
\end{table}
\end{center}
\vspace{-.8cm}

For a fair comparison between the two schemes, let us assume that the same number of jobs $|\J_i|$ is assigned to every worker across both $\cmmO$ and $\cmmT$. In terms of communication; if the bandwidth is limited, $\cmmT$ is preferred, as each worker only needs to send a fraction of the $LM$ matrix symbols to the central server; since $|\J_i|/(k_1k_2)<1$. In terms of computation, the total number of floating-point operations carried out locally by the workers, is the same in the two schemes, when the total number of subtasks ($k$ and $k_1k_2$ respectively) are equal. The preferred scheme therefore depends on how parameters $k_1$ and $k_2$ are selected for $\cmmT$; relative to $k$ for $\cmmO$, and vice versa. In terms of storage, a similar comparison holds, e.g., if we set $k_1=k_2=k$; the workers in both schemes require the same amount of local storage.

Theorem \ref{thm_CMM} and $\cmmO$ were incorporated in a \textit{weighted CMM} approximation scheme \cite{CPH20c}. The idea behind the weighting is that each outer-product matrix, which is requested to be computed, is scaled by an integer factor corresponding to an importance sampling distribution on the submatrix pairs $\{(\At_j,\Bt_j)\}_{j=1}^k$. The fact that the workers in $\cmmO$ compute the outer-products of column-row submatrix pairs, permits us to combine this approach with $CR$-multiplication; a randomized technique which produces a low-rank approximate product of minimum variance \cite{DK01,DKM06a,DKM06b}, as both $\cmmO$ and $CR$-multiplication leverage \eqref{out_pro_identity}. This procedure resulted in an approximate CMM with reduced storage and number of operations at the workers.

The approach of $\cmmT$ was used in \cite{CPH20b}, as a basis to approximate the inverse and pseudoinverse of a matrix in the presence of stragglers. The analogy which takes place is that instead of the products $A\Bbar_i$, the workers in the matrix inversion scheme communicate back
\begin{equation}
  \Big[\hat{B}^{((s+1)i+1)}\ \cdots \ \hat{B}^{((s+1)(i+1))}\Big]\in\R^{N\times\frac{N}{k}},
\end{equation}
for $i=0,1,\cdots,k-1$ and $k=N/(s+1)$. Matrix $\hat{B}$ is the approximation of the inverse of $A\in\R^{N\times N}$, i.e., $A\hat{B}\approx\Ib_N$, whose columns are estimated by the workers; who approximate the solutions to the regression problems
\begin{equation}
  \hat{B}^{(l)}=\arg\min_{\bb\in\R^N}\left\{ \|A\bb-\eb_l\|_2^2 \right\},
\end{equation}
for each of the columns of $\hat{B}$ requested by them, by using an iterative method of their choice.

\section{Comparison to Prior Works}
\label{comparison_sec}

In this section we briefly review some related work, which we compare our schemes to. We review a polynomial based GC, and three polynomial CMM approaches. We compare and contrast each of the CMM approaches to one of ours ($\cmmO$, $\cmmT$, and the special case of $\cmmT$ presented in Appendix \ref{special_case_CMM_app}), which are in fact the closest line of work we could find to each of our proposed schemes. Generally speaking, the communication, storage and computation required by our CMMSs is the same with the respective one we compare it to.

The advantage of the CMM polynomial schemes we will discuss is in terms of the recovery threshold. These schemes achieve a better threshold, as they encode linear combinations of submatrices of one or both the matrices, and then carry out the computation on the encoded submatrices, from which a fraction of all the assigned tasks they then decode to retrieve the matrix product. As in GC, in our CMMSs we first carry out the computations and then encode them locally at the worker nodes, e.g., \eqref{enc_dec_CMM} and \eqref{enc_CMM_2}. Our CMMSs meet the optimal recovery threshold known for GC, as this is met by the underlying GCS; which we proposed. However, our schemes are superior in terms of encoding and decoding complexity. Furthermore, since the encodings are binary matrices consisting only of $0$'s and $1$'s, they introduce no numerical inaccuracies nor rounding errors.

We also draw connections with weighted GC, distributed storage systems, and $\ldpc$ codes.

\subsection{Reed-Solomon Scheme and Weighted Gradient Coding}

First, we compare our proposed GCS with the one introduced in \cite{HASH17}, which provides improvements in terms of the decoding complexity to \cite{TLDK17} and, to the best of authors' knowledge, is the first work to consider constructing the decoding vector $\ab_\I$ online; instead of the matrix comprised of all possible $\ab_{\I}$ decoding vectors.

The main idea in \cite{HASH17} is to use balanced Reed-Solomon codes \cite{HLH16}, which are evaluation polynomial codes. Each column of the encoding matrix $\Bb$ corresponds to a partition $\D_i$ and is associated with a polynomial that evaluates to zero at the respective workers who have not been assigned that partition part. The properties of balanced Reed-Solomon codes imply the decomposition $\Bb_{\I}=\Gb_{\I}\Tb$, where $\Gb_{\I}$ is a Vandermonde matrix over certain roots of unity, and the entries of $\Tb$ corresponds to the coefficients of polynomials; constructed such that their constant term is $1$, i.e., $\Tb_{(1)}=\bold{1}_{1\times k}$. The decoding vector $\ab_{\I}^T$ is the first row of $\Gb_{\I}^{-1}$, for which $\ab_{\I}^T\Gb_{\I}=\bold{e}_1^T$. A direct consequence of this is that $\ab_{\I}^T\Bb_{\I}=\bold{e}_1^T\Tb=\Tb_{(1)}$, thus $\ab_{\I}^T(\Bb_{\I}\gb)=g^T$.

A drawback of this construction is that it works over the complex numbers and requires an inversion, which introduces numerical inaccuracies, due to the fact that the encoding and decoding computations are done with finite precision. Each decoding vector $\ab_{\I}$ can be computed in time $O((n-s)^2)$, while the decoding vector proposed in this paper is constructed in time $O(n+s)$.

The Reed-Solomon based scheme was also used as a basis for \textit{weighted gradient coding} \cite{CPH20a}. The idea behind the weighting is similar to that of the weighted CMMS \cite{CPH20c} described in Subsection \ref{comp_section}. In \textit{weighted GC} the goal is not to recover the sum of the partial gradients, but a weighted linear combination according to some predetermined weight vector $\wb\in\Z_{+}^{1\times k}$. This has many applications in signal processing and statistics. Note also that our proposed binary gradient code $(\Bb,\ab_\I)$ can be extended to a weighted scheme $(\Bbt,\ab_\I)$ by the simple modification of $\Bb$:
\begin{equation}
  \Bbt_{ij} = \begin{cases} \wb_{j} &\mbox{if } \ \Bb_{ij}=1 \\ 0 &\mbox{if } \ \Bb_{ij}=0 \end{cases} \ \ .
\end{equation}

\subsection{CMM \sfsty{MatDot} Codes}
\label{CMM_MatDot}

The proposed $\cmmO$ described in Subsection \ref{1st_sch_sec} is in nature, close to the ``\sfsty{MatDot} Codes'' from \cite{FJHDCG17,DFHJCG19}, which work with the rank-$\tau$ outer-products. In the \sfsty{MatDot} procedure, a polynomial of the submatrices $\At_i$ and $\Bt_i$ is evaluated, i.e.,
\begin{equation}
  p_{A}(x)=\sum_{j=1}^{k}\At_{j}x^{j-1} \quad \text{ and } \quad p_{B}(x)=\sum_{j=1}^{k}\Bt_{j}x^{k-j},
\end{equation}
over arbitrary distinct elements $x_1,\cdots,x_n$ of a finite field $\F_q$ for some $q>n$. The $\iota^{th}$ worker receives the evaluations $p_A(x_\iota)$ and $p_B(x_\iota)$, i.e., the evaluations of the polynomials at the evaluation point corresponding to the worker; $x_\iota$. Each worker is the requested to communicate back the computation $P(x_i)=p_A(x_i)p_B(x_i)$, which is a polynomial of degree $2(k-2)$. The sum of all the outer-products is the coefficient of $x^{k-1}$ of the polynomial $p_A(x)p_B(x)$. Once any $2k-1$ evaluations of the polynomial $P(x)$ on distinct points are received, polynomial interpolation of Reed-Solomon decoding can be applied in order to retrieve the product $AB$ \cite{FJHDCG17,DFHJCG19}.

The \sfsty{MatDot} procedure in \cite{FJHDCG17,DFHJCG19} is described for $A$ and $B$ both being square matrices of size $N\times N$, though there is no reason why it should not work for non-square matrices. The overall encoding complexity for each worker if both matrices considered are squares; is $O(N^2n)$. The overall decoding complexity per worker is $O(k\log^2k)$ for each entry \cite{Kun73}, thus; the overall decoding complexity is $O(N^2k\log^2k)$.

The communication cost per worker of the \sfsty{MatDot} scheme is the same as $\cmmO$. A drawback is the storage required at the central server, which was overcome for our scheme through the alternative decoding process of Algorithm \ref{stream_dec}.

\subsection{CMM Polynomial Codes}
\label{CMM_poly_codes}

The ``Polynomial Codes'' proposed in \cite{YMAA17} follow a similar approach in terms of computational tasks and concatenation as the proposed $\cmmT$, though a different encoding and decoding procedure is considered. The Polynomial Codes CMMS partitions both the matrices, $A$ into $k_2$ submatrices across its rows; and $B$ into $k_2$ submatrices across its columns. The encodings which take place are similar to those of \sfsty{MatDot} Codes, and the workers are requested to compute the product between an encoding of the submatrices $A$ and of the submatrices $B$. Once $k_1k_2$ workers respond, the decoding involves the inversion of a Vandermonde matrix; which introduces numerical inaccuracies, in order to retrieve the submatrices of $C$ each of size $\frac{L}{k_1}\times\frac{M}{k_2}$, which are then concatenated.

A restriction of the Polynomial Codes, which cannot be altered if we want to have a recovery threshold of $k_1k_2$, is the fact that the resulting products of the encoded submatrices all need to have the same size, therefore requiring that $k_1|L$ and $k_2|M$. From the analysis we carried for homogeneous workers of our GCS, the partitions of $A$ and $B$ for our $\cmmT$ do not all need to have the same number of columns.

An extension of the Polynomial codes in \cite{YMAA20} does the same augmentation argument after the decoding step, though their encodings take place over submatrices of $A$ and $B$, where the partitions are carried out for both matrices across the rows and columns.

\subsection{Connection to Distributed Storage Systems}
\label{distr_stor_subsec}

A central theme of this paper was relaxing the condition that $(s+1)\mid n$, in order to design a GCS for a pair of integers $(s,n)$ where $0\leq s<n$ for which $(s+1)\nmid n$. The main idea behind this condition is that the $k$ partitions can be appropriately grouped together and the workers will all get the same number of partitions. This is what is referred to as uniform, defined in Definition \ref{def_unif}. If $d_s(\Bb)=0$, the assignment is balanced, according to the terminology in \cite{TLDK17}. The arithmetic is easier to work with under this assumption, which is also why our construction results in a block diagonal matrix $\Bb$; when $(s+1)\mid n$. This is a permutation of the rows of the fractional repetition scheme from \cite{TLDK17}.

The aforementioned assumption prevails in the construction of \textit{distributed storage systems} as well, which use similar techniques, including replication and block coding to provide reliability. Specifically, in the design of \textit{locally repairable codes} (LRCs) \cite{PD14}. We relate these two applications; of GC and distributed storage, by indicating how this assumption translates from LRCs to GC schemes, by discussing an analog of \cite[Remark 2]{PD14} in this context. The reader is referred to \cite{PD14} and the references therein for further details on LRCs. 

\begin{Rmk}
Observe that when $(s+1)\mid n$, we can partition the set of $k$ disjoint parts of $\D$ into $\frac{n}{s+1}$ disjoint $(s+1)$-groups. The method used in \cite{TLDK17} to prove the lower bound $\|\Bb_{(i)}\|_0\geq \frac{k}{n}(s+1)$ of \cite[Theorem 1]{TLDK17}, relies on the fact that at least one encoding of all of the $(s+1)$-groups need be collected, in order for the scheme to be resilient to any $s$ stragglers. The construction of their code meets this bound with equality, as does ours; under the given assumption. That is, the constructions gives an achievability proof for the case of $(s+1)\mid n$. Furthermore, we show that the pair-wise disjoint parts $\D$ is one of the possibly many arrangements of repair groups that leads to optimal constructions (Theorem \ref{thm_optimality}), as we can rearrange any of the allocation of the parts among each congruence class of workers. This is done in such a way that each partition is allocated to exactly one worker per class.
\end{Rmk}

\subsection{Connection to $\ldpc$ Codes}
\label{LDPC_subsec}

In terms of error-correcting codes, Theorem \ref{cond_B_thm} suggests an analogy between permutations of the task allocations per congruence class $\bar{\Bb}$ of our encoding matrix $\Bb$, and parity check matrices of $\ldpc$ codes. Specifically, $\bar{\Bb}$ matches the definition of an irregular $\ldpc$ parity check matrix $\bold{H}$ \cite{Gal62,mac96,RSU01}, since along the columns of $\bar{\Bb}$ and $\Bb$ we have a balanced load, and along the rows we have an unbalanced load when $(s+1)\nmid n$. That is, $\bar{\Bb}$ suggested by our construction and Theorem \ref{cond_B_thm} is a valid $\bold{H}$, with the additional constraint that any two rows corresponding to the same congruence class $\bmod(s+1)$ have disjoint supports. It is intriguing to see what can be inferred from the constructions and theory of our binary encoding to that of $\ldpc$ codes and vice versa, and if further connections can be established.

\section*{Acknowledgments}
\noindent The research presented in this paper was partially supported by the U.S. National Science Foundation (NSF) under grants NSF CCF-2246213 and CCF-2312752, the U.S. Army Research Office under grant W911NF2310343, the U.S. Department of Energy under grant DE-NA0003921, the Center for Ubiquitous Connectivity (CUbiC) under the JUMP 2.0 program, and the EnCORE Institute under the NSF grant 2217058. We thank the anonymous reviewers and Editors for their constructive comments, for carefully proofreading our work, and providing constructive comments and feedback. We are also grateful to Mert Pilanci for sharing with us the completion times of the experiments carried out in \cite{BP23}.

\section{Conclusion and Future Work}
\label{sec:conc}

In this paper, we introduced a binary GCS for distributed optimization. The main advantages of our code design is that (i) it provides numerically accurate and efficient computations of the gradient, (ii) it removes the limiting assumption $(s+1)\mid n$, and (iii) it has an improved and more tractable decoding stage compared to that of the first GCS proposed in \cite{TLDK17}. We provided an analysis of the proposed design, and showed that it is optimal in terms of the minimizing function $d_s$ defined in Definition \ref{def_unif}. Both homogeneous and heterogeneous workers were considered. It is worth noting that more recent work also considers this direction \cite{SR23}, though their focus is on improving the recovery threshold. We then presented two CMM approaches; as extensions of our binary GCS.

There are several interesting directions for future work. We have seen that the proposed schemes accommodate various matrix operations. It would be interesting to see what other operations they can accommodate, in order to devise exact and approximate straggler resilient coded computing schemes. Another direction is to incorporate privacy and security into our schemes. A third direction, is to further explore the connections between coded computations and codes for distributed storage systems. Specifically, it would be worthwhile to explore the connections between the proposed GCS, the GCS of \cite{KKR19}, and the distributed storage systems of \cite{ERK10}, which we briefly described in Subsection \ref{distr_stor_subsec}.

\appendices
\section{Pseudocode of Encoding Matrices $\tilde{\Bb}_{\Cfr_1}$ and $\tilde{\Bb}_{\Cfr_2}$}
\label{appendix_Enc_matr}

In this appendix we provide the pseudocode of the encoding matrices $\tilde{\Bb}_{\Cfr_1}$ and $\tilde{\Bb}_{\Cfr_2}$ described in Subsections \ref{subsec_B_C1} and \ref{subsec_B_C2}, which are combined to give the encoding matrix $\Bb$ of our GCS.

\begin{algorithm}[h]
\label{B_ones_unb_unif_C1}
\SetAlgoLined
{\footnotesize
  \KwIn{number of workers $n$ and stragglers $s$, where $s,n\in\Z_+$}
  \KwOut{encoding matrix $\tilde{\Bb}_{\Cfr_1}\in\{0,1\}^{n\times n}$ 
  \Comment{assume $n=k$}}
  $\tilde{\Bb}_{\Cfr_1}\gets\bold{0}_{n\times n}$, and use the division algorithm to get the parameters: \\
  $\ind n=\ell\cdot(s+1)+r \qquad r=t\cdot\ell+q \qquad n=\lambda\cdot(\ell+1)+\rt$ \\

  \For{$i=0$ to $r-1$}
  {
    \If{$\ell+r>s$}
    {
      \For{$j=1$ to $\ell+r-s$}
      {
        $\tilde{\Bb}_{\Cfr_1}\Big[(j-1)(s+1)+i,(j-1)(s+1)+1:j(s+1)\Big]=\bold{1}_{1\times(s+1)}$ \\
      }
      \For{$j=\ell+r-s+1$ to $\ell+1$}
      {
        $\tilde{\Bb}_{\Cfr_1}\Big[(j-1)(s+1)+i,(j-1)s+(\ell+r-s)+1:(j-1)s+\ell+r\Big]=\bold{1}_{1\times s}$
      }
    }
    \ElseIf{$\ell+r\leq s$}
    {
      \For{$j=1$ to $\rt$}
      {
        $\tilde{\Bb}_{\Cfr_1}\Big[(j-1)(s+1)+i,(j-1)(\lambda+1)+1:j(\lambda+1)\Big]=\bold{1}_{1\times(\lambda+1)}$ \\
      }
      \For{$j=\rt+1$ to $\ell+1$}
      {
        $\tilde{\Bb}_{\Cfr_1}\Big[(j-1)(s+1)+i,(j-1)\lambda+\rt+1:(j-1)\lambda+\rt+\lambda\Big]=\bold{1}_{1\times\lambda}$
      }
    }
  }
 \Return $\tilde{\Bb}_{\Cfr_1}$
 \caption{Encoding $\tilde{\Bb}_{\Cfr_1}$ --- $\Cfr_1=\left\{[i]_{s+1}\right\}_{i=0}^{r-1}$}
}
\end{algorithm}

\begin{algorithm}[h]
\label{B_ones_unb_unif_C2}
\SetAlgoLined
{\footnotesize
  \KwIn{number of workers $n$ and stragglers $s$, where $s,n\in\Z_+$}
  \KwOut{encoding matrix $\tilde{\Bb}_{\Cfr_2}\in\{0,1\}^{n\times n}$ 
  \Comment{assume $n=k$}}
  $\tilde{\Bb}_{\Cfr_2}\gets\bold{0}_{n\times n}$, and use the division algorithm to get the parameters: \\
  $\ind n=\ell\cdot(s+1)+r \qquad r=t\cdot\ell+q \qquad n=\lambda\cdot(\ell+1)+\rt$ \\

  \For{$i=r$ to $s$}
  {
    \If{$q\equiv0$}
    {
      \For{$j=1$ to $\ell$}
      {
        $\tilde{\Bb}_{\Cfr_2}\Big[(j-1)(s+1)+i,(j-1)(s+t+1)+1:j(s+t+1)\Big]=\bold{1}_{1\times(s+t+1)}$ \\
      }
    }
    \ElseIf{$q>0$}
    {
      \For{$j=1$ to $q$}
      {
        $\tilde{\Bb}_{\Cfr_2}\Big[(j-1)(s+1)+i,(j-1)(s+t+2)+1:j(s+t+1)\Big]=\bold{1}_{1\times(s+t+2)}$ \\
      }
      \For{$j=q+1$ to $\ell$}
      {
        $\Bb\Big[(j-1)(s+1)+i,(j-1)(s+t+1)+q+1:j(s+t+1)+q\Big]=\bold{1}_{1\times(s+t+1)}$
      }
    }
  }
 \Return $\tilde{\Bb}_{\Cfr_2}$
 \caption{Encoding $\tilde{\Bb}_{\Cfr_2}$ --- $\Cfr_2=\left\{[i]_{s+1}\right\}_{i=r}^{s}$}
}
\end{algorithm}

\section{Special Case of $\cmmT$}
\label{special_case_CMM_app}

In this appendix, we present a special of $\cmmT$; where $k_1=1$ and $k=k_2$, as this may have certain applications where one cannot partition both matrices, e.g., \cite{CPH20c}. Additionally, it is simpler to understand this case, and then view $\cmmT$ as applying this special case of the code $k_1$ times.

In contrast to the partitioning \eqref{block_partition} of $\cmmO$; and the general case of $\cmmT$ \eqref{part_ccm2}, in this approach we partition only one of the two matrices, say $B$; along its columns. Each worker computes the product of a submatrix of $B$ with the matrix $A$, and then the central server augments the received computations accordingly. The key property that we utilize is the following:
\begin{equation}
  AB = A\cdot\Big[\Bbar_1 \ \cdots \ \Bbar_k\Big] = \Big[A\Bbar_1\ \cdots \ A\Bbar_k\Big] = \Big[\Ct_1 \ \cdots \ \Ct_k\Big]
\end{equation}
where $\Ct_j\coloneqq A\Bbar_j$, for all $j=1,\cdots,k$. For convenience we assume that $(s+1)\mid M$, and similar to our GCS that $n=k$. To further simplify our construction and description, under the assumption that $(s+1)\mid M$, we can assume that the equipotent partitions of $\{\Bbar_i\}_{i=1}^k$ which are distributed to the workers are of size $\texT=M/k\in\Z_+$, which implies that $(s+1)\mid k$. All in all, each worker receives only one $\Bbar_j\in\R^{N\times T}$; where $\texT$ is a multiple of $(s+1)$, and $\Ct_j\in\R^{L\times T}$. We also note that a similar multiplication can take place if we instead partition $A$ along its rows --- this corresponds to the case where $k_2=1$ and $k=k_1$. For completeness, we present the corresponding encoding through matrix $\Bbt\in\{0,1\}^{k\texT(s+1)\times k\texT}$
\begin{equation}
\label{enc_CMM_2_spec_case}
  \overbrace{\big(\Ib_k\otimes\bold{1}_{(s+1)\times 1}\otimes\Ib_{\texT}\big)}^{\Bbt\in\{0,1\}^{k\texT(s+1)\times k\texT}} \cdot \begin{bmatrix} \Ct_1^T \\ \vdots \\ \Ct_k^T \end{bmatrix} = \begin{bmatrix} \Ct_1^T \\ \vdots \\ \Ct_1^T \\ \vdots \\ \Ct_k^T \\ \vdots \\ \Ct_k^T \end{bmatrix},
\end{equation}
where the transpose of each submatrix $\Ct_j$ appears $s+1$ times along the rows of the encoding $\Bbt C^T\in\R^{(s+1)M\times L}$, each corresponding to one of the $s+1$ potential workers that are asked to compute $\Ct_j$. Lastly, in this special case of $\cmmT$, no rearrangement is needed once the decoding step has been applied.

\section{Numerical Example of the Proposed Encodings and Decodings}
\label{example_app}

In this appendix, we give examples of our encoding and decoding algorithms, to convey the main ideas and help visualize the task allocations which take place. For our GCS of Section \ref{BGC_sec}, consider the case where $n=k=11$ and $s=3$.

By \eqref{eq_1}, \eqref{eq_2}, \eqref{eq_3} we then have $\ell=2,r=3,t=1,q=1$; thus $\ell>r-s$, and the task allocation for $\Cfr_1$ is described by $\Bb_{\Cfr_1}\in\{0,1\}^{(\ell+1)r\times n}$:
$$ \Bb_{\Cfr_1} = \begin{bmatrix} 
\textbf{\blue 1} & \textbf{\blue 1} & \textbf{\blue 1} & \textbf{\blue 1} & & & & & & & \\
\textit{\cyan 1} & \textit{\cyan 1} & \textit{\cyan 1} & \textit{\cyan 1} & & & & & & & \\
\mathfrak{\purple 1} & \mathfrak{\purple 1} & \mathfrak{\purple 1} & \mathfrak{\purple 1} & & & & & & & \\
& & & & \textbf{\blue 1} & \textbf{\blue 1} & \textbf{\blue 1} & \textbf{\blue 1} & & & \\
& & & & \textit{\cyan 1} & \textit{\cyan 1} & \textit{\cyan 1} & \textit{\cyan 1} & & & \\
& & & & \mathfrak{\purple 1} & \mathfrak{\purple 1} & \mathfrak{\purple 1} & \mathfrak{\purple 1} & & & \\
& & & & & & & & \textbf{\blue 1} & \textbf{\blue 1} & \textbf{\blue 1} \\
& & & & & & & & \textit{\cyan 1} & \textit{\cyan 1} & \textit{\cyan 1} \\
& & & & & & & & \mathfrak{\purple 1} & \mathfrak{\purple 1} & \mathfrak{\purple 1}
\end{bmatrix}, $$
where each congruence class is represented by a different color and font for clarity. The zero entries are omitted. The indicated dimensions are for the case where $r>0$, i.e., the remainder block is not empty. The encoding corresponding to the congruence classes $0$ to $r-1$ constructed by Algorithm \ref{B_ones_unb_unif_C1}, is obtained from $\Bb_{\Cfr_1}$ by properly appending zero vectors. Specifically, $\Bb_{\Cfr_1}$ is the restriction of $\tilde{\Bb}_{\Cfr_1}$ to the rows with nonzero entries.

For the remaining congruence classes, $r$ to $s$, since $q=1$; we have:
$$ \Bb_{\Cfr_2} = \begin{bmatrix} 
{\dgreen 1} & {\dgreen 1} & {\dgreen 1} & {\dgreen 1} & {\dgreen 1} & {\dgreen 1} & & & & & \\
& & & & & & {\dgreen 1} & {\dgreen 1} & {\dgreen 1} & {\dgreen 1} & {\dgreen 1}
\end{bmatrix}, $$
where $\Bb_{\Cfr_2}\in\{0,1\}^{\ell(s+1-r)\times n}$ is the restriction of $\tilde{\Bb}_{\Cfr_2}$ constructed by Algorithm \ref{B_ones_unb_unif_C2} to the rows with nonzero entries.

The final step is to appropriately merge the two matrices together, so that the congruence classes are in ascending order. Considering Algorithms \ref{B_ones_unb_unif_C1} and \ref{B_ones_unb_unif_C2}, this corresponds to $\Bb=\tilde{\Bb}_{\Cfr_1}+\tilde{\Bb}_{\Cfr_2}$. We therefore get the following encoding matrix:
$$ \Bb = \begin{bmatrix} 
\textbf{\blue 1} & \textbf{\blue 1} & \textbf{\blue 1} & \textbf{\blue 1} & & & & & & & \\
\textit{\cyan 1} & \textit{\cyan 1} & \textit{\cyan 1} & \textit{\cyan 1} & & & & & & & \\
\mathfrak{\purple 1} & \mathfrak{\purple 1} & \mathfrak{\purple 1} & \mathfrak{\purple 1} & & & & & & & \\
{\dgreen 1} & {\dgreen 1} & {\dgreen 1} & {\dgreen 1} & {\dgreen 1} & {\dgreen 1} & & & & & \\
& & & & \textbf{\blue 1} & \textbf{\blue 1} & \textbf{\blue 1} & \textbf{\blue 1} & & & \\
& & & & \textit{\cyan 1} & \textit{\cyan 1} & \textit{\cyan 1} & \textit{\cyan 1} & & & \\
& & & & \mathfrak{\purple 1} & \mathfrak{\purple 1} & \mathfrak{\purple 1} & \mathfrak{\purple 1} & & & \\
& & & & & & {\dgreen 1} & {\dgreen 1} & {\dgreen 1} & {\dgreen 1} & {\dgreen 1} \\
& & & & & & & & \textbf{\blue 1} & \textbf{\blue 1} & \textbf{\blue 1} \\
& & & & & & & & \textit{\cyan 1} & \textit{\cyan 1} & \textit{\cyan 1} \\
& & & & & & & & \mathfrak{\purple 1} & \mathfrak{\purple 1} & \mathfrak{\purple 1}
\end{bmatrix} \in \{0,1\}^{n\times n}. $$

As discussed in Subsection \ref{val_opt_subsec}, one can apply permutations to the rows and columns of $\Bb$ and obtain a solution to $\IPB$. Furthermore, one can apply a different permutation on the columns for each congruence class, and still obtain a valid GCS. An example of how the allocations can be modified for each congruence class is given below, where the superposition of any set of row vectors of the same color result in $\bold{1}_{1\times11}$, and all colors appear exactly once in each column:

$$ \bar{\Bb} = \begin{bmatrix} 
\textbf{\blue 1} & \textbf{\blue 1} & \textbf{\blue 1} & \textbf{\blue 1} & & & & & & & \\
& \textit{\cyan 1} & & \textit{\cyan 1} & & \textit{\cyan 1} & & \textit{\cyan 1} & & & \\
& & \mathfrak{\purple 1} & & \mathfrak{\purple 1} & & \mathfrak{\purple 1} & & \mathfrak{\purple 1} & & \\
{\dgreen 1} & {\dgreen 1} & & {\dgreen 1} & & & {\dgreen 1} & & {\dgreen 1} & & {\dgreen 1} \\
& & & & \textbf{\blue 1} & \textbf{\blue 1} & \textbf{\blue 1} & \textbf{\blue 1} & & & \\
& & \textit{\cyan 1} & & \textit{\cyan 1} & & \textit{\cyan 1} & & \textit{\cyan 1} & & \\
\mathfrak{\purple 1} & \mathfrak{\purple 1} & & & & & & & & \mathfrak{\purple 1} & \mathfrak{\purple 1} \\
& & {\dgreen 1} & & {\dgreen 1} & {\dgreen 1} & & {\dgreen 1} & & {\dgreen 1} & \\
& & & & & & & & \textbf{\blue 1} & \textbf{\blue 1} & \textbf{\blue 1} \\
\textit{\cyan 1} & & & & & & & & & \textit{\cyan 1} & \textit{\cyan 1} \\
& & & \mathfrak{\purple 1} & & \mathfrak{\purple 1} & & \mathfrak{\purple 1} & & & \end{bmatrix} . $$

\subsection{Example of Algorithm \ref{stream_dec}}
\label{example_stream_dec}

We now demonstrate the decoding procedure of Algorithm \ref{stream_dec}. Considering our earlier example, we have the corresponding encoding matrices $\Bb_{\Cfr_1}$ and $\Bb_{\Cfr_2}$:
$$ \Bb_{\Cfr_1} = \begin{bmatrix} 
\textbf{\blue 1} & \textbf{\blue 1} & \textbf{\blue 1} & \textbf{\blue 1} & & & & & & & \\
\textbf{\blue 1} & \textbf{\blue 1} & \textbf{\blue 1} & \textbf{\blue 1} & & & & & & & \\
\textbf{\blue 1} & \textbf{\blue 1} & \textbf{\blue 1} & \textbf{\blue 1} & & & & & & & \\
& & & & \textit{\cyan 1} & \textit{\cyan 1} & \textit{\cyan 1} & \textit{\cyan 1} & & & \\
& & & & \textit{\cyan 1} & \textit{\cyan 1} & \textit{\cyan 1} & \textit{\cyan 1} & & & \\
& & & & \textit{\cyan 1} & \textit{\cyan 1} & \textit{\cyan 1} & \textit{\cyan 1} & & & \\
& & & & & & & & \mathfrak{\purple 1} & \mathfrak{\purple 1} & \mathfrak{\purple 1} \\
& & & & & & & & \mathfrak{\purple 1} & \mathfrak{\purple 1} & \mathfrak{\purple 1} \\
& & & & & & & & \mathfrak{\purple 1} & \mathfrak{\purple 1} & \mathfrak{\purple 1}
\end{bmatrix} $$
and
$$ \Bb_{\Cfr_2} = \begin{bmatrix} 
{\dgreen 1} & {\dgreen 1} & {\dgreen 1} & {\dgreen 1} & {\dgreen 1} & {\dgreen 1} & & & & & \\
& & & & & & {\diffred \textbf{I}} & {\diffred \textbf{I}} & {\diffred \textbf{I}} & {\diffred \textbf{I}} & {\diffred \textbf{I}}
\end{bmatrix}. $$
Here, each set of computations is represented by a different color and font, and we require that at least one worker of each of color from one of the two encoding matrices has responded.

In Algorithm \ref{stream_dec} we add only the first received computation of each represented color and font to $C$. In contrast to Algorithm \ref{determine_aI} where we required all workers of a single color to respond in order to perform the decoding step, now we require at least one worker from each block, from either $\Bb_{\Cfr_1}$ or $\Bb_{\Cfr_2}$ to respond. We do this separately for the subroutines corresponding to $\Bb_{\Cfr_1}$ and $\Bb_{\Cfr_2}$, and terminate whenever at least one worker from each block, from one of the two groups $\Cfr_1$ or $\Cfr_2$ has responded. This is equivalent to decoding a repetition erasure code. Lastly, recall that the number of workers per block in $\Bb_{\Cfr_2}$ is $s+1-r$, which in our toy example happens to be one.

\subsection{Example of $\cmmT$, with $k_1=1$}
\label{example_CMM2}

An example of $\cmmT$ with $k_1=1$ is provided to help visualize the encoding task assignments, as well as the decoding. Let $n=k_2=8$, $s=1$ and $M$ be arbitrary, with $\texT=M/k_2$ and $\texS=L/1$. Let $\bold{0}_{\texT}$ denote the $\texT\times\texT$ zero matrix. For $I=1$ and $[I]_{s+1}=\{1,3,5,7\}$, the encoding-decoding pair is
$$ \Bbt = \begin{bmatrix} {\cyan \Ib_{\texT}} & & & \\ {\purple \mathbb{I}_{\mathbb{T}}} & & & \\ & {\cyan \Ib_{\texT}} & &  \\ & {\purple \mathbb{I}_{\mathbb{T}}} & & \\ & & {\cyan \Ib_{\texT}} & \\ & & {\purple \mathbb{I}_{\mathbb{T}}} & \\ & & & {\cyan \Ib_{\texT}} \\ & & & {\purple \mathbb{I}_{\mathbb{T}}} \end{bmatrix} \ , \ \abt_{[I]_{s+1}} = \begin{bmatrix} {\cyan \bold{0}_{\texT}} & & & \\ {\purple \mathbb{I}_{\mathbb{T}}} & & & \\ & {\cyan \bold{0}_{\texT}} & &  \\ & {\purple \mathbb{I}_{\mathbb{T}}} & & \\ & & {\cyan \bold{0}_{\texT}} & \\ & & {\purple \mathbb{I}_{\mathbb{T}}} & \\ & & & {\cyan \bold{0}_{\texT}} \\ & & & {\purple \mathbb{I}_{\mathbb{T}}} \end{bmatrix}
$$
where both matrices are of the same size. From this example, it is also clear that $\abt_{[I]}$ is in fact the restriction of $\Bbt$ to the workers corresponding to the congruence class $I$.

\section{Proofs of Section \ref{proposed_scheme_section}}
\label{app_sect_4}

In this appendix we present proofs of Theorems \ref{thm_optimality} and \ref{cond_B_thm}.

\begin{proof}{[Theorem \ref{thm_optimality}]}
To simplify the presentation of the proof, we restrict it to the case where $n=k$. As was previously mentioned, to meet this assumption we can incorporate instances of the data point $(\bold{0}_{p\times 1},0)$ until $N'$ points total are considered; such that $n\mid N'$, and let $k=n$.

Firstly, in the simplest case where $(s+1)\mid n$; we have $\|\Bb_{(i)}\|_0=\frac{k}{n}(s+1)=\frac{k}{\ell}=s+1$ for all $i\in\N_{0,n}$. This results in a block diagonal matrix $\Bb=\bold{1}_{(s+1)\times (s+1)}\otimes \Ib_\ell\in\{0,1\}^{n\times k}$, which is a permutation of the FRC scheme presented in \cite{TLDK17}, for which $d_s(\Bb)=0$; and $|\supp(\Bb)|=\ell\cdot(s+1)^2=k\cdot(s+1)$. The minimum value of the function $d_s$ is therefore attained, while meeting all inequality constraints with an upper bound of $0$. By Proposition \ref{general_cond_prop} and the fact that $\Bb$ is block diagonal with $\Bb_{ij}=1$ if $\Bb_{ij}\neq0$, the fourth constraint is also met.

For the remainder of the proof, we will consider the case where $(s+1)\nmid n$. We first show the reduction from the optimization problem $\OPR$ to the binary integer program $\IPB$ in which we are only considering \textit{binary} GC schemes, and show that our construction through Algorithms \ref{B_ones_unb_unif_C1} and \ref{B_ones_unb_unif_C2} meets the constraints of $\IPB$. We then argue by contradiction, to show that our construction is a solution to $\IPB$.

By imposing the constraints that $\Bb\in\{0,1\}^{n\times k}$ and $\ab_\I\in\{0,1\}^{n}$ for each $\I\in\I_f^n$, the fourth constraint of $\OPR$ can be replaced by \eqref{cond_gen_GC_B} for all $i\in\N_{0,s}$, and by the fact that
\begin{align*}
  \nnz(\Bb) &= \sum_{\iota=1}^{s+1}\sum_{j\in\K_\iota}\|\Bb_{(j)}\|_0\\
  &= \sum_{\iota=1}^{s+1}\|\bold{1}_{1\times k}\|_0\\
  &= \sum_{\iota=1}^{s+1}k\\
  &= k\cdot(s+1)
\end{align*}
we can drop the first constraint of $\OPR$. We therefore have the following binary integer formulation $\IPB$ of $\OPR$, where we require the additional constraints that the encoding matrix $\Bb$ and decoding vectors $\ab_\I$ are over $\{0,1\}$:
\begin{equation*}
\begin{aligned}
\IPB \qquad \arg\min_{\Bb\in\{0,1\}^{n\times k}} \quad & \big\{d_s(\Bb)\big\}\\
\textrm{s.t.} \ \ & \bigsqcup_{i=0}^{s}\K_i=\N_{0,n-1} \ : \ \big||\K_j|-|\K_l|\big|\leq1, \ \forall j,l\in\N_{0,s} \\
& \big|\|\Bb_{(j)}\|_0-\|\Bb_{(l)}\|_0\big|\leq 1, \ \forall j,l\in \K_i, \ \forall i\in\N_{0,s}\\
& \sum_{j\in\K_i}\Bb_{(j)}=\bold{1}_{1\times k}, \ \forall i\in\N_{0,s}
\end{aligned}\ .
\end{equation*}
For our construction through Algorithms \ref{B_ones_unb_unif_C1} and \ref{B_ones_unb_unif_C2}, the partitioning is in terms of congruence classes is $\left\{[i]_{s+1}\right\}_{i=0}^s=\Cfr_1\bigsqcup\Cfr_2=\N_{0,n-1}$, for which $\big||[j]_{s+1}|-|[l]_{s+1}|\big|=1$ if $[j]_{s+1}$ and $[l]_{s+1}$ are in distinct $\Cfr_\iota$'s, and $\big||[j]_{s+1}|-|[l]_{s+1}|\big|=0$ if they are in the same. Hence, the first constraint of $\IPB$ is met.

Considering our partitioning of $\N_{0,n-1}$, the second constraint of $\IPB$ can be split into the cases:
\begin{enumerate}[label=(\alph*)]
  \item $\big|\|\Bb_{(j)}\|_0-\|\Bb_{(l)}\|_0\big|\leq 1$, for all $j,l\in \Cfr_1$
  \item $\big|\|\Bb_{(j)}\|_0-\|\Bb_{(l)}\|_0\big|\leq 1$, for all $j,l\in \Cfr_2$
\end{enumerate}
where (a) and (b) correspond to Algorithms \ref{B_ones_unb_unif_C1} and \ref{B_ones_unb_unif_C2} respectively. By construction, it is clear that both (a) and (b) are met. The final constraint of $\IPB$ for the partitioning through congruence classes, can be reformulated as in \eqref{sum_cond_BGC}, which is also met by construction; for each $c\in\N_{0,s}$. 

Now, for a contradiction, assume that there is a $\Bb'\in\{0,1\}^{n\times k}$ that satisfies the three constraints of $\IPB$, for which $d_s(\Bb')<d_s(\Bb)$. Denote the respective summands by $d_i'\coloneqq\big|\|\Bb_{(i)}'\|_0-(s+1)\big|$ and $d_i\coloneqq\big|\|\Bb_{(i)}\|_0-(s+1)\big|$; for each $i\in\N_{0,n-1}$. That is, $d(\Bb')=\sum_{j=0}^{n-1} d_j'$ and $d(\Bb)=\sum_{j=0}^{n-1} d_j$. Since $d(\Bb')<d_s(\Bb)$, it follows that there is an $i\in\N_{0,n-1}$ for which $d_i'<d_i$; i.e., $d_i-d_i'=\delta$ for a positive integer $\delta$.

By the first constraint of $\IPB$ and \eqref{eq_1}, it follows that the rows of $\Bb'$ are partitioned into $r$ groups $\{\K_\iota\}_{\iota=0}^{r-1}$ of size $\ell+1$; and $(s+1-r)$ groups $\{\K_\iota\}_{\iota=r}^{s}$ of size $\ell$. Without loss of generality, we assume that 
\begin{itemize}
  \item $\K_\iota=\{\iota+z\cdot(s+1):z\in\N_{0,\ell}\}$, for each $\iota\in\N_{0,r-1}$
  \item $\K_\iota=\{\iota+z\cdot(s+1):z\in\N_{0,\ell-1}\}$, for each $\iota\in\{r,r+1,\ldots,s\}$
\end{itemize}
i.e., $\{\K_\iota\}_{\iota=0}^s=\Cfr_1\bigsqcup\Cfr_2$; where $\Cfr_1\equiv\bigsqcup_{\iota=0}^{r-1}\K_\iota$ and $\Cfr_2\equiv\bigsqcup_{\iota=r}^s\K_\iota$ --- this assumption can met by a simple permutation on the rows of $\Bb'$; which does not affect $d_s(\Bb')$ nor any of the constraints on $\IPB$.

We first consider the case where $d_i'<d_i$; for some $i\in\Cfr_1$. Recall that Algorithm \ref{B_ones_unb_unif_C1} can be reduced to only include the \textbf{else if} statement, as when $\ell>s-r$ we have $\lambda=s$ and $\rt=\ell+r-s>0$. Thus, the two \textbf{if} loops are equivalent for $\ell>s-r$. It therefore suffices to only consider the \textbf{else if} statement of the algorithm. We know that $d_i=\big|\|\Bb_{(i)}\|_0-(s+1)\big|\in\big\{|\lambda-(s+1)|,|\lambda+1-(s+1)|\big\}$. When $\lambda=\|\Bb_{(i)}\|_0\leq s$; it follows that $\|\Bb_{(i)}'\|_0=\|\Bb_{(i)}\|-\delta$, and in order to meet \eqref{cond_gen_GC_B}; there is at least one $j\in[i]_{s+1}$ for which $\|\Bb_{(j)}\|_0=\lambda+1$ and $\|\Bb_{(j)}'\|_0\geq\|\Bb_{(j)}\|_0+1$. Therefore
\begin{equation}
  \big|\|\Bb_{(j)}'\|_0-\|\Bb_{(i)}'\|_0\big| = \|\Bb_{(j)}'\|_0-\|\Bb_{(i)}'\|_0 \geq \lambda+2-\big(\|\Bb_{(i)}\|_0-\delta\big) = \delta+2>1
\end{equation}
which violates the second constraint of $\IPB$. By a symmetric argument, one shows a similar contradiction for when $\|\Bb_{(i)}\|_0=\lambda+1\geq s+1$.

Next, we consider the case where $d_i'<d_i$ for $i\in\Cfr_2$. When $q=0$, we have $\|\Bb_{(j)}\|_0=s+t+1$ and $d_j=|s+t+1-(s+1)|=t$ for all $j\in[i]_{s+1}$, thus $d_s(\Bb)=\ell\cdot t$. Note that we cannot have $d_i'=d_i-\delta$ for $\delta\geq t+1$, as this would imply that
\begin{equation}
  \big|\|\Bb_{(i)}'\|-(s+1)\big|=\big|\|\Bb_{(i)}\|_0-(s+1)\big|-\delta = \big|s+t+1-(s+1)\big|-\delta \leq t-(t+1) = -1
\end{equation}
a contradiction. We therefore restrict this difference to $\delta\in\{1,2,\ldots,t\}$, for which it follows that $\|\Bb_{(i)}'\|_0=\|\Bb_{(i)}\|-\delta$. In order to meet \eqref{cond_gen_GC_B}, there is at least one $j\in[i]_{s+1}$ for which $\|\Bb_{(j)}'\|_0\geq\|\Bb_{(j)}\|_0+1$, thus
\begin{equation}
\label{contr_delta_C2}
    \big|\|\Bb_{(j)}'\|_0-\|\Bb_{(i)}'\|_0\big| = \|\Bb_{(j)}'\|_0-\|\Bb_{(i)}'\|_0 \geq \|\Bb_{(j)}\|_0+1 - \big(\|\Bb_{(i)}\|_0-\delta\big) = \delta+1>1
\end{equation}
which violates the second constraint of $\IPB$.

Lastly, we consider the case where $i\in\Cfr_2$, and $q>0$. In the case where $\|\Bb_{(i)}\|_0=s+t+1$, the argument is the same as above. In the case where $\|\Bb_{(i)}\|_0=s+t+2$ and $\delta=1$, by the third constraint of $\IPB$ it follows that there is at least one $j\in[i]_{s+1}$ for which $\|\Bb_{(j)}'\|_0>\|\Bb_{(j)}\|+1$, implying that $d(\Bb')\geq d(\Bb)$, contradicting the assumption that $d(\Bb')<d(\Bb)$. Hence, for the case where $\|\Bb_{(i)}\|_0=s+t+2$, the only differences $\delta$ we need to consider are $\delta\in\{2,3,\ldots,t+1\}$. This though, reduces to the same argument as above, which led to the contradiction in \eqref{contr_delta_C2}.

We therefore conclude that any $\Bb'\{0,1\}^{n\times k}$ for which $d_s(\Bb')<d_s(\Bb)$, violates at least one constraint of $\IPB$. Therefore, our construction of $\Bb$ through Algorithms \ref{B_ones_unb_unif_C1} and \ref{B_ones_unb_unif_C2} is a solution to $\IPB$, the binary version of $\OPR$.
\end{proof}

\begin{proof}{[Theorem \ref{cond_B_thm}]}
Assume condition 1) holds. By our construction of $\ab_\I$, we consider each congruence class separately. The superposition of the rows corresponding to a complete residue system $[i]_{s+1}$, is equal to the sum of these rows over $\R$. We denote this superposition for the $i^{th}$ congruence class by $\bar{\bb}_{[i]}$, i.e.,
\begin{equation}
\label{superp_vec_b_bar}
  \bar{\bb}_{[i]}\coloneqq\left(\sum_{\iota\in[i]_{s+1}}\bar{\Bb}_{(\iota)}\right),
\end{equation}
for which $\ab_{\I}^T\bar{\Bb}=\bar{\bb}_{[i]}$ if $\{\iota:\iota\in[i]_{s+1}\}\subsetneq\I$.
Since the vectors are binary, the superposition results in $\bold{1}_{1\times k}$ only when $1$ appears in each position in a single row of this congruence class. This is precisely condition 2); for $\bar{\Bb}$ satisfying 1).

Now, assume condition 2) holds. For binary rows $\{\bar{\Bb}_{(j)}\}_{j=1}^n$, condition 1) ensures that the cardinality of each of these vectors is equal to that of the corresponding row $\Bb_{(j)}$, i.e., the same number of partitions are allocated to the $j^{th}$ worker through both $\Bb$ and $\bar{\Bb}$. Therefore, for $\bar{\Bb}$ satisfying 1), we get
\begin{equation}
  \bar{\bb}_{[i]} = \left(\sum_{\iota\in[i]_{s+1}}\Bb_{(\iota)}\right) = k\ .
\end{equation}
for all $i\in\{0,\cdots,s\}$. Under the assumption that 2) is satisfied, we have $(\bar{\bb}_{[i]})_l\in\{0,1\}$ for all $l\in\N_k$, thus $\bar{\bb}_{[i]}=\bold{1}_{1\times k}$. Specifically, we saw that for $\bar{\Bb}\in\{0,1\}^{n\times k}$ satisfying condition 2), we have $\bar{\bb}_{[i]}\in\{0,1\}^{1\times k}$ for all $i\in\N_{0,s}$. When condition 1) is also satisfied, we then have $\bar{\bb}_{[i]}=\ab_\I^T\bar{\Bb}=\bold{1}_{1\times k}$ for $i$ such that $\{\iota:\iota\in[i]_{s+1}\}\subsetneq\I$. We conclude that if 1) and 2) are simultaneously satisfied, the first statement holds.

Condition 2) guarantees that $\|\bar{\Bb}^{(i)}\|_0\leq s+1$ for all $i$. Since we are applying a permutation on each set of rows corresponding to a complete residue system separately, we get that $\|\bar{\Bb}^{(i)}\|_0\geq\|\Bb^{(i)}\|_0$, and by our construction of $\Bb$, we are guaranteed that $\|\Bb^{(i)}\|_0=s+1$ for all $i$. By antisymmetry, it is clear that $\|\bar{\Bb}^{(i)}\|_0=s+1$ for all $i$. This completes the proof.
\end{proof}

\section{Numerical Experiments}
\label{num_expers_app}

\subsection{Coded vs. Uncoded}

In the following experiment, we justify the benefit of encoding computations in distributed platforms. We considered the fastest 250 AWS (Amazon Web Services) server completion times from \cite{BP23} to model the delays of our experiment. Similar experiments have been considered in other works, e.g., \cite{TLDK17,HASH17,LLPPR18}, though these consider artificially delayed stragglers, with significantly smaller $n$ and $s$. The experiment presented below was conducted multiple times, and is a representative example of these replicated experiments.

We considered $A\in\R^{L\times N}$ and $B\in\R^{N\times M}$ for $L=M=N=10^4$, which we partition across the respective dimension $N$ to accommodate our scheme $\cmmO$ with $n=250$ workers; to tolerate $s$ stragglers. Specifically, deployed $\cmmO$ with blocks of size $\tau=\frac{N}{k}=20$ for $k=500$, and in the coded setting, each worker was assigned $s+1$ different blocks, where $s$ varied for different experiments. Once the computation times were calculated, we added the delay times from the AWS server completion times \cite{BP23} mentioned above. We ordered the delay times in ascending order, and note that there was significant difference between the responses of workers 184 and 185 (3.325 seconds), and workers 238 and 239 (7.1463 seconds). The $1^{st}$ and $250^{th}$ fastest workers in \cite{BP23} responded after 5.965 and 20.841 seconds respectively. For $T_i$ the computation time of the $i^{th}$ worker from our computation according to $\cmmO$ and $W_i$ the response time according to the empirical distribution from \cite{BP23}, we emulated the overall waiting time by the central server for worker $i$ as: $T_i+(W_i-5.965)$.

In Table \ref{stragg_CMM1_table}, we report the respective response time of the slowest worker which was needed in order to recover the matrix product (i.e., waiting time of the $n-s$ fastest worker; which corresponds to the recovery threshold of $\cmmO$), and the slowest worker of the distributed computation when no coding was applied. It is worth mentioning that our approach could have a greater speed up when the decoding step of  Algorithm \ref{stream_dec} were to be used instead. In this experiment, we report the worst case scenario of our approach. Since these trials were carried out on the same personal computer, we expect that the times reported in the ``Uncoded'' row should be the same. They differ slightly, as the matrix product corresponding to each column, was for different random matrices $A$ and $B$. Our approach was beneficial in the case where $\cmmO$ was deployed with $s\in\{6,8,10,12,14,16\}$, which is a consequence of the delay times of the slower servers.

\begin{center}
\begin{table}[h]
\centering
\begin{tabular}{ |p{1cm}||p{.8cm}|p{.8cm}|p{.8cm}|p{.8cm}|p{.8cm}|p{.8cm}|p{.8cm}|p{.8cm}| }
\hline
\multicolumn{9}{|c|}{Emulated AWS Recovery Times for $\cmmO$} \\
\hline
\hline
$\quad s$ & $2$ & $4$ & $6$ & $8$ & $10$ & $12$ & $14$ & $16$ \\
\hline
$\cmmO$ & \textbf{21.0334} & \textbf{20.9271} & \textbf{20.7375} & \textbf{20.623} & \textbf{20.603} & \textbf{13.2823} & \textbf{13.0001} & \textbf{12.7439} \\
\hline
Uncoded & 21.093 & 21.0017 & 21.5725 & 21.0055 & 21.0527 & 21.3964 & 21.0164 & 21.092 \\
\hline
\end{tabular}
\vspace{3mm}
\caption{Emulated AWS response times, for $\cmmO$ and uncoded distributed matrix multiplication. We report the waiting times of the slowest responsive worker we need in order to perform $\cmmO$; i.e., the time of the $n-s$ fastest worker, and the slowest of the $250$ workers in the uncoded scenario. In bold, we indicate which of the two respective times was faster. The times reported are in seconds.}
\label{stragg_CMM1_table}
\end{table}
\end{center}
\vspace{-.8cm}

We also note that the decoding matrix $\Ab$ of the GCS from \cite{TLDK17}, for $(s=9,n=250)$ would be comprised of approximately $9.09\times 10^{15}$ total rows, while for $(s=24,n=250)$ we would have $1.83\times 10^{33}$ total rows, which are both infeasible to store and search through. We report these values for the given parameters of $s$, as the work of \cite{TLDK17} requires that $(s+1)\mid n$. Even in the simple case where $(s=5,n=30)$, a personal computer cannot store the resulting decoding matrix $\Ab$, which is of size $142506\times 30$.

\subsection{Numerical Error Experiment}
\label{num_stab_exp_sec}

A compelling motivation for using binary matrices for our GCS, is that they require low complexity for both encoding and decoding; and do not introduce numerical
nor rounding errors, compared to schemes with encoding matrices defined over the real or complex numbers. Another motivating factor, is the fact that the decoding step of \cite{TLDK17} constructs and stores a matrix $\Ab$ of size ${{n}\choose{s}}\times n$; which is comprised of the decoding vectors corresponding to each of the possible index sets $\I$. Searching through $\Ab$ to find the corresponding $\ab_{\I}$ is also prohibitive; as it has $\Theta(n^s)$ rows to look through

In this experiment, we compare the residual error $\|g-\gt\|_2$ for $g$ the gradient computed by a single server without any encoding or decoding taking place, and $\gt$ the gradient computed by the two competing GC schemes. We consider the case where $n=k=18$, $s=5$, and the gradient and partial gradients are of dimension $p=100$; with varying norm $\|g\|_2$. For these parameters, we have $\Ab\in\R^{8568\times 18}$.

Our GCS introduces no error through the encoding and decoding steps, i.e., $\|g-\gt\|_2=0$. This is due to the fact that no multiplication or division takes place by scalars, and that the decoding step turns out to be an addition of the exact partial gradients\footnote{By \textit{exact}, we mean $g_i$ as computed by a single server.}. For the FRC scheme, we considered a binary encoding matrix $\Bb$ whose construction is deterministic, which introduces no error. The construction of matrix $\Ab$, solves ${{n}\choose{s}}$ linear systems of the form $\Bb_{\I}^T\cdot y_{\I}=\bold{1}_{f\times k}$; for $\Bb_{\I}\in\{0,1\}^{f\times k}$ the submatrix of $\Bb$ corresponding to $\I$. This is done using matlab's backslash operation, which computes the decoding vector $\ab_{\I}\gets y_{\I}=(\Bb_{\I}^T)^{\dagger}\cdot\bold{1}_{1\times k}$ for each $\I$.

For this simple case where $n=k=18$ and $s=5$, out of the $8568$ index sets $\I$, the condition number $\kappa_2(\Bb_{\I})\coloneqq\frac{\sigma_{\max}(\Bb_{\I})}{\sigma_{\min}(\Bb_{\I})}$ of 596 submatrices $\Bb_{\I}$ is greater than matlab's finite floating-point number of $1.7977\times10^{308}$. The remaining 7972 $\Bb_{\I}$ submatrices had a condition number ranging between $1.7121\times10^{50}$ and $2.4610\times10^{276}$. Each of the systems $\Bb_{\I}^T\cdot y_{\I}=\bold{1}_{s\times k}$ are therefore ill-conditioned. On the other hand, the construction of our decoding vectors is an addition of standard basis vectors \eqref{vecs_ai}, which does not require solving a linear system.

In Figure \ref{FRC_error_plot} we show how the error introduced by the FRC scheme relates to the norm of the gradient. In machine learning applications where $p$ is large and many samples are considered; it is expected to have a large $\|g\|_2$, which results in larger error when the FRC scheme is deployed. Our scheme introduced no error, as shown in Figure \ref{binary_error_plot}.

\begin{figure}[h]
\hspace*{-.8cm}
 \centering
    \includegraphics[scale=.11]{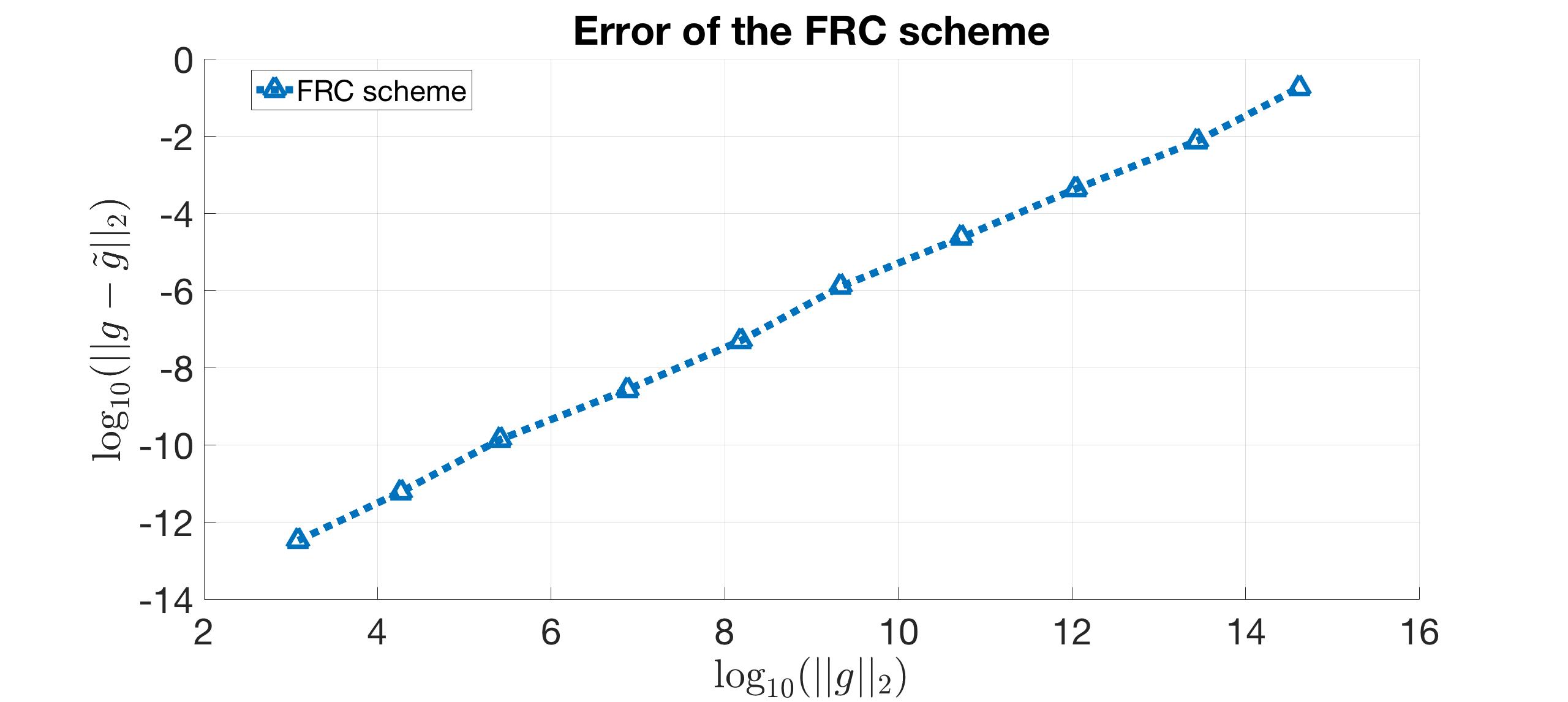}
  \caption{Propagation of error introduced by the FRC scheme, in log-scale.}
  \label{FRC_error_plot}
\end{figure}

\begin{figure}[h]
\hspace*{-.8cm}
 \centering
    \includegraphics[scale=.11]{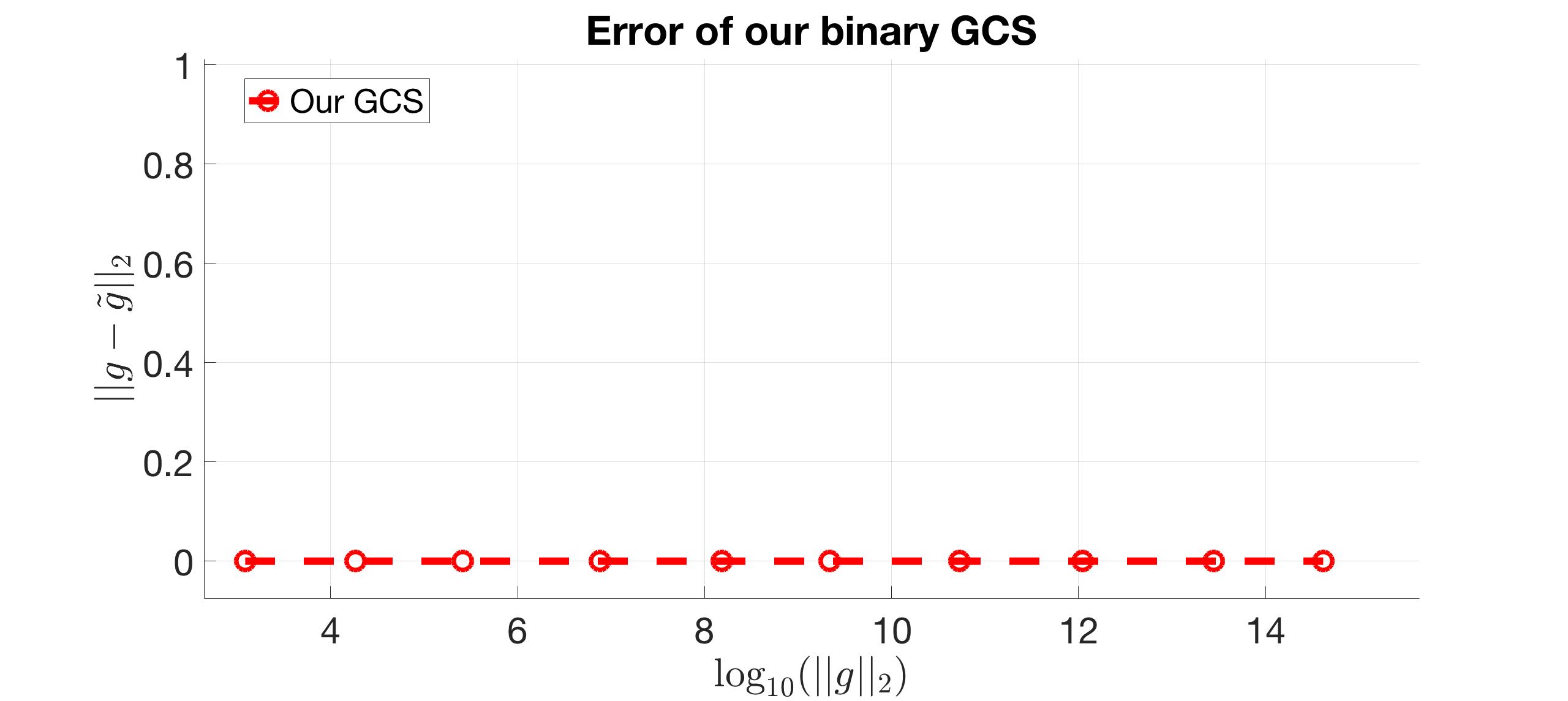}
  \caption{Error plot of our GCS.}
  \label{binary_error_plot}
\end{figure}

\section{Application of CMM to Distributed Gradient Descent for Frobenius-norm Minimization}
\label{appl_Fr_norm}

In this appendix we first review gradient descent, and then focus on gradient descent for Frobenius-norm minimization problems, as defined in \eqref{Th_star_pr}; for the objective function $L_F$ defined in \eqref{opt_pr_LF}. We briefly describe the following motivating problems which try to solve \eqref{Th_star_pr} or similar optimization problems: \textit{nonnegative matrix factorization} (NMF), $k$-$\SVD$, low rank matrix approximation, sparse coding and  the best $k$-rank approximation; which relates to principal component analysis. This is not an exhaustive list of where the objective function $L_F$ has been utilized, since many more applications do exist.

Recall that in gradient descent, we consider a minimization problem with a convex differentiable objective function $L\colon\Cc\to\R$ over an open constrained set $\Cc\subseteq \R^p$. Then, given an initial $\theta^{[0]}\in\Cc$; the following update is performed at iteration $t+1$:
\begin{equation}
  \theta^{[t+1]}\gets\theta^{[t]}-\xi_t\cdot\nabla_{\theta}L(\D;\theta^{[t]}), \quad \text{ for } t=0,1,2,\ldots
\end{equation}
until a specified termination criterion is met. The parameter $\xi_t$ is the step-size, which may be adaptive or fixed. Note that GC is only concerned with computing the gradient at each step and hence, selecting an appropriate step-size was not discussed in this paper.

In the literature regarding gradient descent for coded computing thus far, only the case where the gradient of the objective function \eqref{th_star_pr} is a vector has been considered or discussed. In order to tie together Sections \ref{str_problem_GC} and \ref{BGC_sec} with Section \ref{matr_mult_sec}, we discuss the case where the gradient is a matrix, e.g.,
\begin{equation}
\label{opt_pr_LF}
  L_F(\Xb,\Yb;\Theta) \coloneqq \|\Xb\Theta-\Yb\|_F^2 = \sum_{i=1}^m\overbrace{\|\Xb\Theta^{(i)}-\Yb^{(i)}\|_2^2}^{L_{ols}\left(\Xb,\Yb^{(i)};\Theta^{(i)}\right)},
\end{equation}
for $\Xb\in\R^{N\times p}$, $\Theta\in\R^{p\times m}$ and $\Yb\in\R^{N\times m}$. The gradient is
\begin{equation}
  \nabla_{\Theta} L_F(\Xb,\Yb;\Theta) = 2\Xb^T(\Xb\Theta-\Yb),
\end{equation}
which is computed in order to approximate the solution to
\begin{equation}
\label{Th_star_pr}
  \Theta^{\star} = \argmin_{\Theta\in\R^{p\times m}}\big\{L_F(\Xb,\Yb;\Theta)\big\}
\end{equation}
via gradient descent. Similar to the ordinary least squares objective function $L_{ols}$, \eqref{Th_star_pr} has the closed-form solution:
\begin{equation}
  \Theta^{\star} = \Xb^{\dagger}\Yb = (\Xb^T\Xb)^{-1}\Xb^T\Yb,
\end{equation}
which is intractable for large $N$. In practice, it is often preferred to approximate $\Theta^{\star}$.

A motivating application is if we have measurements $\{\Yb^{(i)}\}_{i=1}^m$ from $m$ different sensors in different locations or from different sources, for the same corresponding $\Xb$, and we want to interpolate the corresponding optimal models $\{\theta_i\}_{i=1}^m$ for each sensor or source.

More generally, the goal of most nonlinear regression problems is to solve the problem
\begin{equation}
  \min_{\kappa\in\mathcal{H}}\left\{\sum_{i=1}^N\big(\kappa(\xb_i)-y_i\big)^2\right\},
\end{equation}
where $\kappa$ comes from a hypothesis class $\mathcal{H}$ that fits the training data, for which one can use the kernel trick to solve efficiently. What we present can be applied also to regression problems of this type, as well as kernel regression problems \cite{SB14}.

Throughout the gradient descent process, the second summand $2\Xb^T\Yb$ is constant. Hence, at every iteration we only need to compute the matrix product $\Xbh\Theta$, where $\Xbh=2\Xb^T\Xb$ is also a constant matrix. Then, depending on which matrix multiplication scheme we decide to use, the workers will receive the entire matrix $\Xbh$ or a submatrix of it at the beginning of the distributed computation process, and a submatrix of $\Theta$'s update
\begin{equation}
  \Theta^{[t+1]}\gets\Theta^{[t]}-\xi_t\cdot\nabla_{\Theta}L_F(\Xb,\Yb;\Theta^{[t]})
\end{equation}
at each iteration. In an iterative process, it is preferred to reduce the total communication cost as much as possible. Hence, we prefer to communicate only part of $\Theta$ when possible.

Solving for the loss function $L_\Theta$ may also be viewed as solving multiple linear regression problems simultaneously. This is due to its decomposition into a summation of $m$ separate least squares objective functions, with the same data matrix $\Xb$. For $\Theta^{(i)}=\theta_i$, it follows that
\begin{equation}
  \Theta^{\star} = {\begin{pmatrix} | & | & & | \\ \theta_1^{\star} & \theta_2^{\star} & \hdots & \theta_m^{\star} \\ | & | & & | \end{pmatrix}} \in \R^{p\times m},
\end{equation}
for $\theta_i^{\star}$ being the minimal solution to $L_{ols}(\Xb,\Yb^{(i)};\Theta^{(i)})$, for each $i\in\N_m$. To guarantee convergence for all $\theta_i$, we can fix $\xi_t=2/\sigma_{\text{max}}(\Xb)^2$ for all iterations.

We point out that the above problem could indeed be solved by using regular GC, as we have
\begin{equation}
  \|\Ab\|_F^2 = \left\|\left[\big(\Ab^{(1)}\big)^T \ \cdots \ \big(\Ab^{(m)}\big)^T\right]^T\right\|_2^2\ ,
\end{equation}
for any real-valued matrix $\Ab$ comprised of $m$ columns. We also note that the least squares regression problem in the presence of stragglers, was studied in \cite{LKYSA18}. 

\subsection{Nonnegative Matrix Factorization}

The NMF problem deals with decomposing a matrix $A\in\R_{\geq0}^{L\times M}$ with nonnegative entries into two matrices $U\in\R_{\geq0}^{L\times N}$ and $V\in\R_{\geq0}^{N\times M}$, by attempting to solve
\begin{equation}
  \min_{\substack{U\in\R_{\geq0}^{L\times N}\\ V\in\R_{\geq0}^{N\times M}}}\big\{\|A-UV\|_F^2\big\}
\end{equation}
for $U,V$ with the appropriate dimensions. In \cite{LS01} a multiplicative update algorithm is proposed:
\begin{equation}
  V\gets V\cdot\frac{U^TA}{U^TUV} \qquad \text{ and } \qquad U\gets U\cdot\frac{AV^T}{UVV^T},
\end{equation}
where the division is done element-wise. Multiple multiplications are required for these updates, and the matrices can be quite large, e.g., when dealing with recommender systems. Multiple distributive multiplications are required at each iteration, which makes this process a lot more cumbersome. Therefore, speeding up this process is even more crucial. Further details on this algorithm and how to incorporate gradient methods to solve NMF can be found in \cite{Chi07a,Chi07b,EMWK18}.

\subsection{Low-Rank Approximation}

Consider the problem of finding a low-rank approximation of a matrix $A\in\R^{L\times M}$. That is, for an approximation of rank $k$ or less we want to find $B=U_BV_B$ for $U_B\in\R^{L\times k}$ and $V_B\in\R^{k\times M}$, which can be done by solving the problem
\begin{equation}
\label{low_rank_pr}
  \min_{\substack{B\in\R^{L\times M}\\ \rank(B)\leq k}}\big\{\|A-B\|_F^2\big\} = \min_{\substack{U_B\in\R^{L\times k}\\ V_B\in\R^{k\times M}}}\big\{\|A-U_BV_B\|_F^2\big\},
\end{equation}
where it is easier to work with $B=U_BV_B$ in the case where both $L$ and $M$ are large, e.g., in terms of storage and computations. 

The objective function of \eqref{low_rank_pr} is bi-convex. A common alternating minimization approach fixes one of the matrices and optimizes the other, and then alternates.

It is well-known that the best $k$-rank approximation for many norms, including the Euclidean and Frobenius norms, can be computed through the truncated singular value decomposition ($\SVD$). By the Eckart–Young theorem \cite{EY36}
\begin{equation}
  A_k = U \Sigma_k V^T = \sum_{i=1}^k\sigma_iU^{(i)}(V^{(i)})^T
\end{equation}
solves \eqref{low_rank_pr}. The $\SVD$ takes $O(LM\cdot\min\{L,M\})$ time itself to compute, which is cumbersome. To avoid computing the $\SVD$, we can resort to approximating $A_k$ by solving
\begin{equation}
  \min_{\substack{U\in\R^{L\times k}\\ U^TU=\Ib_k}}\big\{\|A-UU^TA\|_F^2\big\} = \min_{\substack{U\in\R^{L\times k}\\ U^TU=\Ib_k}}\big\{-\tr(U^TAA^TU)\big\}
\end{equation}
through gradient descent, where the gradient with respect to $U$ is $-AA^TU$. Hence, at each iteration we distributively compute $AA^TU^{[t]}$ \cite{SB14,KHFM19}. For $\hat{U}$ our final solution to the above minimization problem, our $k$-rank approximation of $A$ will be $\hat{U}\hat{U}^TA$, which is an approximation of $A_k$. For $U^\star$ being the exact solution, we have $A_k=U^\star({U^\star})^TA$.

Other problems in which CMM could be utilized in distributed gradient methods are the weighted low-rank matrix approximation \cite{BWZ19} and the $k$-$\SVD$ algorithm \cite{AEB06}. These involve similar objective functions, whose gradients have a matrix form which require at least one matrix-matrix multiplication. Thus, the process would be accelerated if these were to be computed distributively.


\balance
\bibliographystyle{IEEEtran}
\bibliography{refs_all}

\end{document}